\documentclass{llncs}
\usepackage{amssymb}
\usepackage{pxfonts}
\message{<Paul Taylor's Proof Trees, 2 August 1996>}

\newdimen\proofrulebreadth \proofrulebreadth=.05em
\newdimen\proofdotseparation \proofdotseparation=1.25ex
\newdimen\proofrulebaseline \proofrulebaseline=2ex
\newcount\proofdotnumber \proofdotnumber=3
\let\then\relax
\def\hfi{\hskip0pt plus.0001fil}
\mathchardef\squigto="3A3B
%
% flag where we are
\newif\ifinsideprooftree\insideprooftreefalse
\newif\ifonleftofproofrule\onleftofproofrulefalse
\newif\ifproofdots\proofdotsfalse
\newif\ifdoubleproof\doubleprooffalse
\let\wereinproofbit\relax
%
% dimensions and boxes of bits
\newdimen\shortenproofleft
\newdimen\shortenproofright
\newdimen\proofbelowshift
\newbox\proofabove
\newbox\proofbelow
\newbox\proofrulename
%
% miscellaneous commands for setting values
\def\shiftproofbelow{\let\next\relax\afterassignment\setshiftproofbelow\dimen0 }
\def\shiftproofbelowneg{\def\next{\multiply\dimen0 by-1 }%
\afterassignment\setshiftproofbelow\dimen0 }
\def\setshiftproofbelow{\next\proofbelowshift=\dimen0 }
\def\setproofrulebreadth{\proofrulebreadth}

%=============================================================================
\def\prooftree{% NESTED ZERO (\ifonleftofproofrule)
%
% first find out whether we're at the left-hand end of a proof rule
\ifnum  \lastpenalty=1
\then   \unpenalty
\else   \onleftofproofrulefalse
\fi
%
% some space on left (except if we're on left, and no infinity for outermost)
\ifonleftofproofrule
\else   \ifinsideprooftree
        \then   \hskip.5em plus1fil
        \fi
\fi
%
% begin our proof tree environment
\bgroup% NESTED ONE (\proofbelow, \proofrulename, \proofabove,
%               \shortenproofleft, \shortenproofright, \proofrulebreadth)
\setbox\proofbelow=\hbox{}\setbox\proofrulename=\hbox{}%
\let\justifies\proofover\let\leadsto\proofoverdots\let\Justifies\proofoverdbl
\let\using\proofusing\let\[\prooftree
\ifinsideprooftree\let\]\endprooftree\fi
\proofdotsfalse\doubleprooffalse
\let\thickness\setproofrulebreadth
\let\shiftright\shiftproofbelow \let\shift\shiftproofbelow
\let\shiftleft\shiftproofbelowneg
\let\ifwasinsideprooftree\ifinsideprooftree
\insideprooftreetrue
%
% now begin to set the top of the rule (definitions local to it)
\setbox\proofabove=\hbox\bgroup$\displaystyle % NESTED TWO
\let\wereinproofbit\prooftree
%
% these local variables will be copied out:
\shortenproofleft=0pt \shortenproofright=0pt \proofbelowshift=0pt
%
% flags to enable inner proof tree to detect if on left:
\onleftofproofruletrue\penalty1
}

%=============================================================================
% end whatever box and copy crucial values out of it
\def\eproofbit{% NESTED TWO
%
% various hacks applicable to hypothesis list 
\ifx    \wereinproofbit\prooftree
\then   \ifcase \lastpenalty
        \then   \shortenproofright=0pt  % 0: some other object, no indentation
        \or     \unpenalty\hfil         % 1: empty hypotheses, just glue
        \or     \unpenalty\unskip       % 2: just had a tree, remove glue
        \else   \shortenproofright=0pt  % eh?
        \fi
\fi
%
% pass out crucial values from scope
\global\dimen0=\shortenproofleft
\global\dimen1=\shortenproofright
\global\dimen2=\proofrulebreadth
\global\dimen3=\proofbelowshift
\global\dimen4=\proofdotseparation
\global\count255=\proofdotnumber
%
% end the box
$\egroup  % NESTED ONE
%
% restore the values
\shortenproofleft=\dimen0
\shortenproofright=\dimen1
\proofrulebreadth=\dimen2
\proofbelowshift=\dimen3
\proofdotseparation=\dimen4
\proofdotnumber=\count255
}

%=============================================================================
\def\proofover{% NESTED TWO
\eproofbit % NESTED ONE
\setbox\proofbelow=\hbox\bgroup % NESTED TWO
\let\wereinproofbit\proofover
$\displaystyle
}%
%
%=============================================================================
\def\proofoverdbl{% NESTED TWO
\eproofbit % NESTED ONE
\doubleprooftrue
\setbox\proofbelow=\hbox\bgroup % NESTED TWO
\let\wereinproofbit\proofoverdbl
$\displaystyle
}%
%
%=============================================================================
\def\proofoverdots{% NESTED TWO
\eproofbit % NESTED ONE
\proofdotstrue
\setbox\proofbelow=\hbox\bgroup % NESTED TWO
\let\wereinproofbit\proofoverdots
$\displaystyle
}%
%
%=============================================================================
\def\proofusing{% NESTED TWO
\eproofbit % NESTED ONE
\setbox\proofrulename=\hbox\bgroup % NESTED TWO
\let\wereinproofbit\proofusing
\kern0.3em$
}

%=============================================================================
\def\endprooftree{% NESTED TWO
\eproofbit % NESTED ONE
% \dimen0 =     length of proof rule
% \dimen1 =     indentation of conclusion wrt rule
% \dimen2 =     new \shortenproofleft, ie indentation of conclusion
% \dimen3 =     new \shortenproofright, ie
%                space on right of conclusion to end of tree
% \dimen4 =     space on right of conclusion below rule
  \dimen5 =0pt% spread of hypotheses
% \dimen6, \dimen7 = height & depth of rule
%
% length of rule needed by proof above
\dimen0=\wd\proofabove \advance\dimen0-\shortenproofleft
\advance\dimen0-\shortenproofright
%
% amount of spare space below
\dimen1=.5\dimen0 \advance\dimen1-.5\wd\proofbelow
\dimen4=\dimen1
\advance\dimen1\proofbelowshift \advance\dimen4-\proofbelowshift
%
% conclusion sticks out to left of immediate hypotheses
\ifdim  \dimen1<0pt
\then   \advance\shortenproofleft\dimen1
        \advance\dimen0-\dimen1
        \dimen1=0pt
%       now it sticks out to left of tree!
        \ifdim  \shortenproofleft<0pt
        \then   \setbox\proofabove=\hbox{%
                        \kern-\shortenproofleft\unhbox\proofabove}%
                \shortenproofleft=0pt
        \fi
\fi
%
% and to the right
\ifdim  \dimen4<0pt
\then   \advance\shortenproofright\dimen4
        \advance\dimen0-\dimen4
        \dimen4=0pt
\fi
%
% make sure enough space for label
\ifdim  \shortenproofright<\wd\proofrulename
\then   \shortenproofright=\wd\proofrulename
\fi
%
% calculate new indentations
\dimen2=\shortenproofleft \advance\dimen2 by\dimen1
\dimen3=\shortenproofright\advance\dimen3 by\dimen4
%
% make the rule or dots, with name attached
\ifproofdots
\then
        \dimen6=\shortenproofleft \advance\dimen6 .5\dimen0
        \setbox1=\vbox to\proofdotseparation{\vss\hbox{$\cdot$}\vss}%
        \setbox0=\hbox{%
                \advance\dimen6-.5\wd1
                \kern\dimen6
                $\vcenter to\proofdotnumber\proofdotseparation
                        {\leaders\box1\vfill}$%
                \unhbox\proofrulename}%
\else   \dimen6=\fontdimen22\the\textfont2 % height of maths axis
        \dimen7=\dimen6
        \advance\dimen6by.5\proofrulebreadth
        \advance\dimen7by-.5\proofrulebreadth
        \setbox0=\hbox{%
                \kern\shortenproofleft
                \ifdoubleproof
                \then   \hbox to\dimen0{%
                        $\mathsurround0pt\mathord=\mkern-6mu%
                        \cleaders\hbox{$\mkern-2mu=\mkern-2mu$}\hfill
                        \mkern-6mu\mathord=$}%
                \else   \vrule height\dimen6 depth-\dimen7 width\dimen0
                \fi
                \unhbox\proofrulename}%
        \ht0=\dimen6 \dp0=-\dimen7
\fi
%
% set up to centre outermost tree only
\let\doll\relax
\ifwasinsideprooftree
\then   \let\VBOX\vbox
\else   \ifmmode\else$\let\doll=$\fi
        \let\VBOX\vcenter
\fi
% this \vbox or \vcenter is the actual output:
\VBOX   {\baselineskip\proofrulebaseline \lineskip.2ex
        \expandafter\lineskiplimit\ifproofdots0ex\else-0.6ex\fi
        \hbox   spread\dimen5   {\hfi\unhbox\proofabove\hfi}%
        \hbox{\box0}%
        \hbox   {\kern\dimen2 \box\proofbelow}}\doll%
%
% pass new indentations out of scope
\global\dimen2=\dimen2
\global\dimen3=\dimen3
\egroup % NESTED ZERO
\ifonleftofproofrule
\then   \shortenproofleft=\dimen2
\fi
\shortenproofright=\dimen3
%
% some space on right and flag we've just made a tree
\onleftofproofrulefalse
\ifinsideprooftree
\then   \hskip.5em plus 1fil \penalty2
\fi
}

%==========================================================================
% IDEAS
% 1.    Specification of \shiftright and how to spread trees.
% 2.    Spacing command \m which causes 1em+1fil spacing, over-riding
%       exisiting space on sides of trees and not affecting the
%       detection of being on the left or right.
% 3.    Hack using \@currenvir to detect LaTeX environment; have to
%       use \aftergroup to pass \shortenproofleft/right out.
% 4.    (Pie in the sky) detect how much trees can be "tucked in"
% 5.    Discharged hypotheses (diagonal lines).

\sloppy
%----------------------------------------------------------------------------------------
\def\df{{\stackrel{\mathrm{def}}{\Longleftrightarrow}}}

\newcommand{\pts}{\mathit{pts}}
\newcommand{\ptsp}{\mathit{pts}^\prime}
\newcommand{\lb}{\llbracket}
\newcommand{\rb}{\rrbracket}
\newcommand{\ra}{\rightarrow}
\newcommand{\Ra}{\Longrightarrow}
\newcommand{\while}{\mathit{while\ b\ do\ } S_t}
\newcommand{\ifs}{\mathit{if\ b\ then\ }S_t \mathit{\ else\ }S_f}
%----------------------------------------------------------------------------------------

\begin{document}
\title{Dead Code Elimination Based Pointer Analysis for Multithreaded Programs}
\author{Mohamed A. El-Zawawy} \institute{College of Computer and
Information Sciences\\
Al-Imam M. I.-S. I. University\\ Riyadh 11432\\Kingdom of Saudi
Arabia\\ and \\ Department of
Mathematics\\ Faculty of Science \\ Cairo University\\ Giza 12613\\
Egypt\\ \email{maelzawawy@cu.edu.eg}} \maketitle
%----------------------------------------------------------------------------------------
%----------------------------------------------------------------------------------------
\begin{abstract}
This paper presents a new approach for optimizing multitheaded
programs with pointer constructs. The approach has applications in
the area of certified code (proof-carrying code) where a
justification or a proof for the correctness of each optimization is
required. The optimization meant here is that of dead code
elimination.

Towards optimizing multithreaded programs the paper presents a new
operational semantics for parallel constructs like join-fork
constructs, parallel loops, and conditionally spawned threads. The
paper also presents a novel type system for flow-sensitive pointer
analysis of multithreaded programs. This type system is extended to
obtain a new type system for live-variables analysis of
multithreaded programs. The live-variables type system is extended
to build the third novel type system, proposed in this paper, which
carries the optimization of dead code elimination. The justification
mentioned above takes the form of type derivation in our approach.
\end{abstract}

\footnote{Short title: Dead Code Elimination for Multithreaded
Programs. \\ Mathematics Subject Classification: 68Q55, 68N19.}
%----------------------------------------------------------------------------------------
%----------------------------------------------------------------------------------------
\section{Introduction}\label{intro}

One of the mainstream programming approaches today is
multithreading. Using multiple threads is useful in many ways like
(a) concealing suspension caused by some commands, (b) making it
easier to build huge software systems, (c) improving execution of
programs especially those that are executed on multiprocessors, and
(d) building advanced user interfaces.

The potential interaction between threads in a multithreaded program
complicates both the compilation and the program analysis processes.
Moreover this interaction also makes it difficult to extend the
scope of program analysis techniques of sequential programs to cover
multithreaded programs.

Typically optimizing multithreaded programs is achieved in an
algorithmic form using data-flow analyses. This includes
transforming the given program into a control-flow graph which is a
convenient form for the algorithm to manipulate. For some
applications like certified code, it is desirable to associate each
program optimization with a justification or a proof for the
correctness of the optimization. For these cases, the algorithmic
approach to program analysis is not a good choice as it does not
work on the syntactical structure of the program and hence does not
reflect the transformation process. Moreover the desired
justification must be relatively simple as it gets checked within a
trusted computing base.

Type systems stand as a convenient alternative for the algorithmic
approach of program analyses when a justification is necessary. In
the type systems approach, analysis and optimization of programs are
directed by the syntactical structure of the program. Inference
rules of type systems are advantageously relatively simple and so is
the justification which takes the form of a type derivation in this
case. The adequacy of type systems for program analysis has already
been studied for example in~\cite{Benton04,Laud06,Nielson02}

Pointer analysis is among the most important program analyses and it
calculates information describing contents of pointers at different
program points. The application of pointer analysis to multithreaded
programs results in information that is required for program
analyses and compiler optimizations such as live-variables analysis
and dead code elimination, respectively. The live-variables analysis
finds for each program point the set of variables whose values are
used usefully in the rest of the program. The results of
live-variables analysis is necessary for the optimization of dead
code elimination which removes code that has no effect on values of
variables of interest at the end of the program.

This paper presents a new approach for optimizing multithreaded
programs with pointer constructs. The scope of the proposed approach
is broad enough to include certified (proof-carrying) code
applications where a justification for optimization is necessary.
Type systems are basic tools of the new approach which considers
structured parallel constructs such as join-fork constructs,
parallel loops, and conditionally spawned threads. The
justifications in our approach take the form of type derivations.
More precisely, the paper presents a type system for flow-sensitive
pointer analysis of multithreaded programs. The live-variables
analysis of multithreaded programs is also treated in this paper by
a type system which is an extension of the type system for pointer
analysis. The extension has the form of another component being
added to points-to types. The dead code elimination of multithreaded
programs is then achieved using a type system which is again an
extension of the type system for live-variables analysis. This time
the extension takes the form of a transformation component added to
inference rules of the type system for live-variables analysis. To
prove the soundness of the three proposed type systems, a novel
operational semantics for parallel constructs is proposed in this
paper.
\begin{figure}[thp]
{\footnotesize{
\begin{minipage}[b]{0.2\linewidth}
\quad\end{minipage}
\begin{minipage}[b]{0.2\linewidth}
\begin{tabular}{l}
  $1.\quad x\coloneqq \& y;$ \\
  $2.\quad *x\coloneqq  2;$ \\
  $3.\quad \textit{par}\{$ \\
  $4.\hspace{.8cm}  \{y\coloneqq 4\}$  \\
  $5.\hspace{.8cm}  \{y\coloneqq 5;$\\
  $6.\hspace{.81cm}  *x\coloneqq 6\}$\\
  $7.\hspace{.8cm}\};$  \\
  $8.\quad x\coloneqq 8;$  \\
  $9.\quad x\coloneqq 9$
\end{tabular}
\end{minipage}
\begin{minipage}[b]{0.2\linewidth}
  $\quad\quad\Longrightarrow$
\end{minipage}
\begin{minipage}[b]{0.2\linewidth}
\begin{tabular}{l}
  $\quad x\coloneqq \& y;$ \\
  $\quad \textit{skip};$ \\
  $\quad \textit{par}\{$ \\
  $\hspace{.8cm}  \{y\coloneqq 4\}$  \\
  $\hspace{.8cm}  \{\textit{skip};$\\
  $\hspace{.81cm}  *x\coloneqq 6\}$\\
  $\hspace{.8cm}\};$  \\
  $\quad \textit{skip};$  \\
  $\quad x\coloneqq 9$
\end{tabular}
\end{minipage}
\caption{A motivating example}\label{example1} }}\end{figure}

Some comments are in order. The language that we use does not allow
pointer arithmetic, so it can be thought of as a Pascal-style
language. The parallel construct assumes that each thread is
executed atomically. This is a restriction that is not very common
to many implementation of threads. But based on our construct, it
should be easy to treat a more general construct. Writing computer
programs to efficiently calculate the types, using our type systems,
is not hard. This is so because each time the program calculates a
type, the program will be given a pre-type and seeking its post type
or vice versa. In the semantics our langauge, we assume that
$\mathbb{Z}$ and the set of addresses, Addrs, are disjoint sets.

\subsection*{Motivation}
Figure~\ref{example1} presents a motivating example of the work
presented in this paper. Consider the program on the left-hand-side
of the figure. Suppose that at the end of the program we are
interested in the values of $x$ and $y$. We note that the assignment
in line $8$ is dead code as the variable $x$ is modified in line $9$
before we make any use of the value that the variable gets in line
$8$. The assignment in line $2$ indirectly modifies $y$ which is
modified again in the \textit{par} command before any useful use of
the value that $y$ gets in line $2$. Therefore line $2$ is  dead
code. The \textit{par} command has two threads which can be executed
in any order. If the first thread is executed first then assignments
in lines $4$ and $5$ become dead code. If the second thread is
executed first then assignments in lines $5$ and $6$ become dead
code. Therefore the dead code in the \textit{par} command is the
assignment in line $5$ only.

This paper presents a technique that discovers and removes such dead
code in parallel structured programs with pointer constructs. The
output of the technique is a program like that on the
right-hand-side of Figure~\ref{example1}. In addition to the
join-fork construct (\textit{par}), the paper also considers other
parallel constructs like conditionally spawned threads and parallel
loops. With each such program optimization, our technique presents a
justification or a proof for the correctness of the optimization.
The proof takes the form of a type derivation.

\subsection*{Contributions}
Contributions of this paper are the following:
\begin{enumerate}
\item A simple yet powerful operational semantics for multi-threaded
programs with pointer constructs.
\item A novel type system for pointer analysis of multithreaded
programs. To our knowledge, this is the first attempt to use type
systems for pointer analysis of multithreaded programs.
\item A new type systems for live-variables analysis of
multithreaded programs.
\item An original type system for the optimization of
dead code elimination for multithreaded programs.
\end{enumerate}

\subsubsection*{Organization} The rest of the paper is organized as
follows. The language that we study (a while language enriched with
pointer and parallel constructs) and an operational semantics for
its constructs are presented in Section~\ref{semantics}.
Sections~\ref{pointers} and~\ref{live} present our type systems for
flow-sensitive pointer and live-variables analyses, respectively.
The type system carrying program optimization is introduced in
Section~\ref{optimization}. Related work is discussed in
Section~\ref{rwork}.

\section{Programming language}\label{semantics}
This section presents the programming language (Figure~\ref{lang})
we use together with an operational semantics for its constructs.
The language is the simple \textit{while} language~\cite{Hoare69}
enriched with commands for pointer manipulations and structured
parallel constructs.
\begin{figure}[h]
\begin{eqnarray*}
&  & {n}\in {\mathbb{Z}},\ {x}\in {\textit{Var}},\ \textit{and}\ {\oplus}
\in{\{+,-,\times\}}\\
e \in \textit{Aexprs}  &  \Coloneqq & {x}\mid  {n} \mid {e_1 \oplus e_2}\\
b\in \textit{Bexprs} &  \Coloneqq & {\textit{true}} \mid
{\textit{false}} \mid {\neg b}\mid {e_1 = e_2} \mid {e_1 \leq e_2}
\mid{b_1 \wedge b_2}
\mid {b_1 \vee b_2} \\
S\in \textit{Stmts} &  \Coloneqq & {x\coloneqq e}\mid {x\coloneqq \&
y}\mid {*x\coloneqq e}\mid {x\coloneqq *
y}\mid{\textit{skip}}\mid {S_1;S_2}\mid {\ifs} \mid \\
&  & {\while}\mid \textit{par}\{\{S_1\},\ldots,\{S_n\}\}\mid
\textit{par-if}\{(b_1,S_1),\ldots,(b_n,S_n)\}\mid
\textit{par-for}\{S\}.
\end{eqnarray*}
\caption{The programming language.}\label{lang}
\end{figure}

The parallel constructs include join-fork constructs, parallel
loops, and conditionally spawned threads. The \textit{par}
(join-fork) construct starts executing many concurrent threads at
the beginning of the \textit{par} construct and then waits until the
completion of all these executions at the end of the \textit{par}
construct. Semantically, the \textit{par} construct  can be
expressed approximately as if the threads are executed sequentially
in an arbitrary order. The parallel loop construct included in our
language is that of \textit{par-for}. This construct executes, in
parallel, a statically unknown number of threads each of which has
the same code (the loop body). Therefore the semantics of
\textit{par-for} can be expressed using that of the \textit{par}
construct. The construct including conditionally spawned threads is
that of \textit{par-if}. This construct executes, in parallel, its
$n$ concurrent threads. The execution of thread $(b_i,S_i)$ includes
the execution of $S_i$ only if $b_i$ is true.

One way to define the meaning of the constructs of our programming
language, including the parallel constructs, is by an operational
semantics. This amounts to defining a transition relation
$\rightsquigarrow$ between states which are defined as follows.
\begin{definition}
\begin{enumerate}
    \item $\textit{Addrs}=\{x^\prime\mid x\in \textit{Var}\}$ and
    ${\textit{Val}} = {\mathbb{Z} \cup \textit{Addrs}}.$
    \item \textit{state} $\in$ States = $\{\textit{abort}\}\cup
    \{\gamma\mid\gamma\in \Gamma= Var\longrightarrow Val\}$.
\end{enumerate}
\end{definition}

The semantics of arithmetic and Boolean expressions are defined as
usual except that arithmetic and Boolean operations are not allowed
on pointers.

{\footnotesize{
\[
{\lb n\rb \gamma} = {n}\quad {\lb \& x\rb \gamma} = {x^\prime}\quad
{\lb x\rb \gamma} = {\gamma(x)}\quad {\lb \textit{true}\rb \gamma} =
{\textit{true}}\quad {\lb \textit{false}\rb \gamma} =
{\textit{false}}
\]\[
{\lb *x\rb \gamma} = \left\{
\begin{array}{ll}
\gamma(y) &\mbox{if }{\gamma(x)}={y^\prime}, \\
! & \mbox{otherwise.}
\end{array} \right.
\quad {\lb e_1\oplus e_2\rb \gamma} = \left\{\begin{array}{ll} {\lb
e_1\rb \gamma}\oplus {\lb e_2\rb \gamma} & \mbox{if }{\lb e_1\rb
\gamma,\lb e_2\rb \gamma}\in{\mathbb{Z}}, \\
! &\mbox{otherwise.}
\end{array}
\right.
\]\[
{\lb \neg A\rb \gamma} = \left\{
\begin{array}{ll}
{\neg (\lb A\rb \gamma)} &\mbox{if }{\lb A\rb \gamma} \in
{\{\textit{true}, \textit{false}\},} \\! &\mbox{otherwise.}
\end{array} \right. \quad
{\lb e_1= e_2\rb \gamma} =\left\{\begin{array}{ll} ! & \mbox{if }
{\lb e_1\rb \gamma}= {!}
\mbox{ or } {\lb e_2\rb \gamma}= {!}, \\
\textit{true} & \mbox{if }{\lb e_1\rb \gamma}= { \lb e_2\rb
\gamma}\not = {!},\\ \textit{false} & \mbox{otherwise}.
\end{array}\right.
\]\[
{\lb e_1\le e_2\rb \gamma} =\left\{\begin{array}{ll} ! & \mbox{if }
{\lb e_1\rb \gamma}\not \in \mathbb{Z}
\mbox{ or } {\lb e_2\rb \gamma}\not \in \mathbb{Z}, \\
{\lb e_1\rb \gamma}\le { \lb e_2\rb \gamma} & \mbox{otherwise}.
\end{array}\right.
\]\[
\mbox{For }{\diamond} \in{\{\wedge,\vee\}},\  {\lb b_1\diamond
b_2\rb \gamma} =\left\{\begin{array}{ll} ! & \mbox{if } {\lb b_1\rb
\gamma}= {!}
\mbox{ or } {\lb b_2\rb \gamma}= {!}, \\
{\lb b_1\rb \gamma}\diamond { \lb b_2\rb \gamma} & \mbox{otherwise}.
\end{array}\right.
\]}}
The inference rules of our semantics (transition relation) are
defined as follows: {\footnotesize{
\[
\begin{prooftree}
{\lb e\rb\gamma} = {!}\justifies  x \coloneqq e:
\gamma\rightsquigarrow \textit{abort}\thickness=0.08em
\end{prooftree}\quad
\begin{prooftree}
{\lb e\rb\gamma} \not= {!}\justifies  x \coloneqq e:\gamma
\rightsquigarrow \gamma[x\mapsto\lb e\rb\gamma]\thickness=0.08em
\end{prooftree} \quad
\begin{prooftree} \gamma(x) =z^\prime \quad z \coloneqq e:\gamma
\rightsquigarrow \textit{state}\justifies *x \coloneqq e:
\gamma\rightsquigarrow \textit{state}\thickness=0.08em
\end{prooftree}
\]\[
\begin{prooftree} \gamma(x)\notin \textit{Addrs}
\justifies *x \coloneqq e: \gamma\rightsquigarrow
\textit{abort}\thickness=0.08em \end{prooftree}\quad
\begin{prooftree}
\justifies  x \coloneqq \& y:\gamma \rightsquigarrow \gamma[x\mapsto
y^\prime]\thickness=0.08em
\end{prooftree} \quad
\begin{prooftree} \gamma(y) =z^\prime \quad x \coloneqq z: \gamma\rightsquigarrow
\gamma^\prime \justifies x \coloneqq * y: \gamma\rightsquigarrow
\gamma^\prime \thickness=0.08em
\end{prooftree}\]\[
\begin{prooftree} \gamma(y)\notin \textit{Addrs}
 \justifies x \coloneqq *y: \gamma\rightsquigarrow
\textit{abort}\thickness=0.08em \end{prooftree} \quad
\begin{prooftree}
\justifies  \textit{skip}: \gamma \rightsquigarrow
\gamma\thickness=0.08em
\end{prooftree}
\quad
\begin{prooftree} S_1:\gamma \rightsquigarrow \textit{abort}\justifies
S_1;S_2: \gamma \rightsquigarrow \textit{abort}\thickness=0.08em
\end{prooftree}\quad \begin{prooftree}
S_1:\gamma \rightsquigarrow \gamma^{\prime\prime}\quad
S_2:\gamma^{\prime\prime}\rightsquigarrow state\justifies S_1;S_2:
\gamma \rightsquigarrow state\thickness=0.08em
\end{prooftree}
\]\[
\begin{prooftree}
{\lb b\rb \gamma}= {!} \justifies \ifs:\gamma \rightsquigarrow
\textit{abort}\thickness=0.08em
\end{prooftree}
\quad
\begin{prooftree}
{\lb b\rb \gamma}= {\textit{true}}  \quad S_t:\gamma
\rightsquigarrow \textit{state} \justifies \ifs:\gamma
\rightsquigarrow \textit{state}\thickness=0.08em
\end{prooftree}\]\[
\begin{prooftree}
{\lb b\rb \gamma}= {\textit{false}}  \quad S_f:\gamma
\rightsquigarrow \textit{state} \justifies \ifs:\gamma
\rightsquigarrow \textit{state}\thickness=0.08em
\end{prooftree}
\quad\begin{prooftree} {\lb b\rb \gamma}= {!} \justifies
\while:\gamma \rightsquigarrow \textit{abort}\thickness=0.08em
\end{prooftree}
\quad \begin{prooftree} {\lb b\rb \gamma}= {\textit{false}}
\justifies \while:\gamma \rightsquigarrow\gamma\thickness=0.08em
\end{prooftree}\]\[
\begin{prooftree}
{\lb b\rb \gamma}= {\textit{true}}  \quad S:\gamma \rightsquigarrow
\gamma^{\prime\prime} \quad \while:\gamma^{\prime\prime}
\rightsquigarrow \textit{state} \justifies \while:\gamma
\rightsquigarrow \textit{state}\thickness=0.08em
\end{prooftree} \quad\begin{prooftree}
{\lb b\rb \gamma}= {\textit{true}}  \quad S:\gamma \rightsquigarrow
\textit{abort} \justifies \while:\gamma \rightsquigarrow
\textit{abort}\thickness=0.08em
\end{prooftree}
\]}}
\begin{itemize}
\item[$\bullet$] \textbf{Join-fork}: {\footnotesize{\[
\begin{prooftree}
\justifies \textit{par}\{\{S_1\},\ldots,\{S_n\}\}:\gamma
\rightsquigarrow \gamma^\prime\using{\dagger}\thickness=0.08em
\end{prooftree}\quad \begin{prooftree}
\justifies \textit{par}\{\{S_1\},\ldots,\{S_n\}\}:\gamma
\rightsquigarrow \textit{abort}\using{\ddagger}\thickness=0.08em
\end{prooftree}\]
}}
\begin{itemize}
\item[$\dagger$] there exist a permutation
$\theta:\{1,\ldots,n\}\ra\{1,\ldots,n\}$ and $n+1$ states
$\gamma=\gamma_1,\ldots,\gamma_{n+1}=\gamma^\prime$ such that for
every $1\le i\le n,\ S_{\theta(i)}:\gamma_i\ra \gamma_{i+1}$.
\item[$\ddagger$] there exist $m$ such that $1\le m\le n$, a one-to-one map
$\beta:\{1,\ldots,m\}\ra\{1,\ldots,n\}$, and $m+1$ states
$\gamma=\gamma_1,\ldots,\gamma_{m+1}=\textit{abort}$ such that for
every $1\le i\le m,\ S_{\beta(i)}:\gamma_i\ra \gamma_{i+1}$.
\end{itemize}
\item[$\bullet$] \textbf{Conditionally spawned threads}: {\footnotesize{
\[\begin{prooftree}
\textit{par}\{\{\mathit{if\ b_1\ then\ }S_1 \mathit{\ else\
skip}\},\ldots,\{\mathit{if\ b_n\ then\ }S_n \mathit{\ else\
skip}\}\}:\gamma\rightsquigarrow \gamma^\prime\justifies
\textit{par-if}\{(b_1,S_1),\ldots,(b_n,S_n)\}:\gamma
\rightsquigarrow \gamma^\prime\thickness=0.08em
\end{prooftree}
\]\[
\begin{prooftree}
\textit{par}\{\{\mathit{if\ b_1\ then\ }S_1 \mathit{\ else\
skip}\},\ldots,\{\mathit{if\ b_n\ then\ }S_n \mathit{\ else\
skip}\}\}:\gamma\rightsquigarrow \textit{abort}\justifies
\textit{par-if}\{(b_1,S_1),\ldots,(b_n,S_n)\}:\gamma
\rightsquigarrow \textit{abort}\thickness=0.08em
\end{prooftree}
\]}}\item[$\bullet$] \textbf{Parallel loops}: {\footnotesize{\[
\begin{prooftree}
\exists n.\
\textit{par}\{\overbrace{\{S\},\ldots,\{S\}}^{n-times}\}:\gamma\rightsquigarrow
\gamma^\prime\justifies \textit{par-for}\{S\}:\gamma
\rightsquigarrow \gamma^\prime\thickness=0.08em
\end{prooftree}\qquad\begin{prooftree}
\exists n.\ \textit{par}\{\overbrace{\{S\},\ldots,\{S\}}^{n-times}
\}:\gamma\rightsquigarrow \textit{abort}\justifies
\textit{par-for}\{S\}:\gamma \rightsquigarrow
\textit{abort}\thickness=0.08em
\end{prooftree}\]
}} \end{itemize}

A simple example for the \textit{par-for} command is
$\textit{par-for}\{x:=10\}$. The execution of this command amounts
to fixing a number randomly, say $7$, and then to concurrently
execute the seven threads $\{x:=10\}_1,\ldots,\{x:=10\}_7$.
\section{Pointer analysis}\label{pointers}
In this section, we present a novel technique for flow-sensitive
pointer analysis of structured parallel programs where shared
pointers may be updated simultaneously. Our technique manipulates
important parallel constructs; join-fork constructs, parallel loops,
and conditionally spawned threads. The proposed technique has the
form of a compositional type system which is simply structured.
Consequently results of the analysis are in the form of types
assigned to expressions and statements approved by type derivations.
Therefore a type is assigned to each program point of a statement
(program). This assigned type specifies for each variable in the
program a conservative approximation of the addresses that may be
assigned to the variable. The set of points-to types \textit{PTS}
and the relation ${\models}\subseteq {\Gamma\times \textit{PTS}}$
are defined as follows:\begin{definition}
\begin{enumerate}
\item $\mathit{PTS} = \{ \pts\mid \pts:\textit{Var}
\ra 2^{\textit{Addrs}}\}$.
\item $\pts \le \ptsp\ \df\ \forall x\in\textit{Var}.\
\pts(x)\subseteq \ptsp(x)$.
\item $\gamma\models\pts \ \df\ (\forall x\in \textit{Var}.\
\gamma(x)\in \textit{Addrs} \Ra \gamma(x)\in \pts(x))$.
\end{enumerate}
\end{definition}

The judgement of an expression has the form $e:\pts\ra A$. The
intended meaning of this judgment, which is formalized in
Lemma~\ref{lem1}, is that $A$ is the collection of addresses that
$e$ may evaluate to in a state of type $\pts$. The judgement of a
statement has the form $S:\pts\ra \ptsp$. This judgement simply
guarantees that if $S$ is executed in a state of type $\pts$ and the
execution terminates in a state $\gamma^\prime$, then
$\gamma^\prime$ has type $\ptsp$. Typically, the pointer analysis
for a program $S$ is achieved via a post-type derivation for the
bottom type (mapping variables to $\emptyset$) as the pre-type.

 The inference rules of our type system for pointer analysis are the
following: {\footnotesize {\[
\begin{prooftree} \justifies n:
\pts\ra \emptyset \thickness=0.08em
\end{prooftree} \quad
\begin{prooftree}
\justifies x: \pts\ra \pts(x) \thickness=0.08em
\end{prooftree}\quad \begin{prooftree}
\justifies e_1\oplus e_2: \pts\ra\emptyset \thickness=0.08em
\end{prooftree}
\quad
\begin{prooftree}
e:\pts\ra A\justifies x \coloneqq e: \pts\ra \pts[x\mapsto
A]\thickness=0.08em\using{(\coloneqq^p)}
\end{prooftree}\]\[ \begin{prooftree}
\justifies x \coloneqq \& y: \pts\ra \pts[x\mapsto
\{y^\prime\}]\thickness=0.08em \using{(\coloneqq
\&^p)}\end{prooftree} \qquad
\begin{prooftree}
\justifies \textit{skip}: \pts\ra \pts\thickness=0.08em
\end{prooftree}
\]\[
\begin{prooftree} \forall z^\prime \in \pts(y).\ x\coloneqq z:
\pts\ra\ptsp\justifies x \coloneqq *y: \pts\ra
\ptsp\thickness=0.08em \using{(\coloneqq *^p)}
\end{prooftree}\quad\begin{prooftree}
\forall z^\prime \in \pts(x).\ z\coloneqq e: \pts\ra\ptsp \justifies
*x \coloneqq e: \pts\ra \ptsp\thickness=0.08em\using{(*\coloneqq^p)}
\end{prooftree}
\]\[
\begin{prooftree}
S_i:\pts\cup \cup_{j\not=i} \pts_j\ra \pts_i\justifies
\textit{par}\{\{S_1\},\ldots,\{S_n\}\}: \pts\ra \cup_i
\pts_i\thickness=0.08em
\using{(\textit{par}^p)}\end{prooftree}\qquad
\begin{prooftree} S_1: \pts \ra \pts^{\prime\prime}
\quad S_2: \pts^{\prime\prime} \ra \ptsp\justifies S_1;S_2: \pts\ra
\ptsp \thickness=0.08em\using{(\textit{seq}^p)}
\end{prooftree}
\]\[\begin{prooftree}
\textit{par}\{\{\mathit{if\ b_1\ then\ }S_1 \mathit{\ else\
skip}\},\ldots,\{\mathit{if\ b_n\ then\ }S_n \mathit{\ else\
skip}\}\}: \pts\ra \ptsp\justifies
\textit{par-if}\{(b_1,S_1),\ldots,(b_n,S_n)\}: \pts\ra
\ptsp\thickness=0.08em \using{(\textit{par-if}^p)}\end{prooftree}
\]\[
\begin{prooftree}
S: \pts\cup \ptsp\ra \ptsp\justifies \textit{par-for}\{S\}: \pts\ra
\ptsp\thickness=0.08em \using{(\textit{par-for}^p)}\end{prooftree}
\qquad
\begin{prooftree}
S_t:\pts\ra \ptsp \quad S_f:\pts\ra \ptsp\justifies \ifs: \pts\ra
\ptsp \thickness=0.08em\using{(\textit{if}^p)}
\end{prooftree}\]\[
\begin{prooftree}
S_t: \pts\ra \pts \justifies \while: \pts\ra \pts
\thickness=0.08em\using{(\textit{whl}^p)}
\end{prooftree}\qquad
\begin{prooftree}
\ptsp_1\leq \pts_1\quad S:\pts_1 \ra \pts_2 \quad \pts_2 \leq
\ptsp_2\justifies S: \ptsp_1\ra
\ptsp_2\thickness=0.08em\using{(\textit{csq}^p)}
\end{prooftree}
\]
}}

The inference rules corresponding to assignment commands are clear.
For the rule $(\textit{par}^p)$ of the join-fork command,
\textit{par}, one possibility is that the execution of a thread
$S_i$ starts before the execution of any other thread starts.
Another possibility is that the execution starts after executions of
all other threads end. Of course there are many other possibilities
in between. Consequently, the analysis of the thread $S_i$ must
consider all such possibilities. This is reflected in the pre-type
of $S_i$ and the post-type of the \textit{par} command. Clearly
$pts$, in $(\textit{par}^p)$, is the given pre-type for the
\textit{par} command and for witch the rule calculates a post-type.
The union operation in the rule $(\textit{par}^p)$ makes the
pre-type of threads general enough. Similar explanations clarify the
rules $(par-if^p)$ and $(par-for^p)$.

We note that a type invariant is required to type a \textit{while}
statement. Also to achieve the analysis for one of the
\textit{par}'s threads we need to know the analysis results for all
other threads. However obtaining these results requires the result
of analyzing the first thread. Therefore there is a kind of
circularity in rule $(\textit{par}^p)$. Similar situations are in
rules $(\textit{par-if}^p)$ and $(par-for^p)$. Such issues can be
treated using a fix-point algorithm. The convergence of this
algorithm is guaranteed as the rules of our type system are monotone
and the set of points-to types \textit{PTS} is a complete lattice.
What makes calculations actual simple is that for any given program
the lattice \textit{PTS} is finite. The rule $(\textit{csq}^p)$ is
necessary to calculate a type invariant.
\begin{lemma}\label{lem1}
\begin{enumerate}
\item Suppose $e:\pts \ra A$ and $\gamma\models \pts$. Then $\lb e\rb
\gamma \in \textit{Addrs}$ implies $\lb e\rb \gamma \in A$.
\item $\pts\le \pts^\prime \Longleftrightarrow (\forall \gamma.\ \gamma\models
\pts\Ra \gamma\models \pts^\prime)$.
\end{enumerate}
\end{lemma}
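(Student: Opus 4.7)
My plan is to handle the two parts independently, since they share little technical machinery.

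For Part~1, I would proceed by straightforward case analysis on the structure of $e \in \textit{Aexprs}$, matching the three expression inference rules. In the case $e = n$, the judgement gives $A = \emptyset$, and $\lb n \rb \gamma = n \in \mathbb{Z}$; since $\mathbb{Z}$ and $\textit{Addrs}$ are disjoint by assumption, the hypothesis $\lb e\rb\gamma \in \textit{Addrs}$ is false, so the implication holds vacuously. The case $e = e_1 \oplus e_2$ is similar: $A = \emptyset$, and the semantics forces $\lb e_1\oplus e_2\rb\gamma$ to be either $!$ or an integer, neither of which lies in $\textit{Addrs}$. The only interesting case is $e = x$, where $A = \pts(x)$ and $\lb x\rb\gamma = \gamma(x)$; if $\gamma(x)\in\textit{Addrs}$, then $\gamma\models\pts$ directly yields $\gamma(x)\in\pts(x)$, as required.

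For Part~2, the forward direction ($\Rightarrow$) is immediate from the definitions: assume $\pts \leq \pts^\prime$ and $\gamma \models \pts$; then for any $x$ with $\gamma(x)\in\textit{Addrs}$, we have $\gamma(x)\in\pts(x)\subseteq\pts^\prime(x)$, so $\gamma\models\pts^\prime$. The reverse direction ($\Leftarrow$) I would prove by contraposition. Assume $\pts \not\leq \pts^\prime$; then there exist a variable $x$ and an address $a \in \pts(x)\setminus\pts^\prime(x)$. The task is to exhibit a witness $\gamma$ with $\gamma\models\pts$ but $\gamma\not\models\pts^\prime$.

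The natural construction is $\gamma(x) = a$ and $\gamma(y) = 0$ for every other variable $y$ (any fixed element of $\mathbb{Z}$ would do, using the disjointness of $\mathbb{Z}$ and $\textit{Addrs}$). Then for every variable $z$, either $z = x$ and $\gamma(z) = a \in \pts(x)$, or $\gamma(z) = 0 \notin \textit{Addrs}$, so the defining implication of $\gamma\models\pts$ is satisfied throughout. On the other hand $\gamma(x) = a \in \textit{Addrs}$ while $a \notin \pts^\prime(x)$, so $\gamma\not\models\pts^\prime$. The only subtlety --- and the ``hard part'' such as it is --- is choosing a default value outside $\textit{Addrs}$ so that the untouched variables do not inadvertently impose extra constraints on $\pts$; the disjointness assumption stated in the introduction is exactly what makes this construction go through.
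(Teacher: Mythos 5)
Your proposal is correct and matches the paper's argument: part~1 by inspection of the three expression rules, the forward direction of part~2 directly from the definitions, and the reverse direction via exactly the same witness state (the variable of interest mapped to the offending address, all other variables mapped to $0\in\mathbb{Z}$, relying on the disjointness of $\mathbb{Z}$ and $\textit{Addrs}$). The only difference is that you phrase the reverse direction contrapositively while the paper argues it directly, which is an immaterial restatement of the same construction.
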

\begin{proof}
The first item is obvious. The left-to-right direction of $(2)$ is
easy. The other direction is proved as follows. Suppose
$y^\prime\in\pts(x)$. Then the state $\{(x,y^\prime),(t,0)\mid t\in
\textit{Var}\setminus\{x\}\}$ is of type $\pts$ and hence of type
$\ptsp$ implying that $y^\prime\in\ptsp(x)$. Therefore
$\pts(x)\subseteq \ptsp(x)$. Since $x$ is arbitrary, $\pts\leq
\ptsp$.
\end{proof}

\begin{theorem} \( (Soundness)\)\label{soundness-points to}
Suppose that $S:\pts\ra \ptsp,\ S:\gamma\rightsquigarrow
\gamma^\prime$, and ${\gamma}\models {pts}$. Then ${\gamma^\prime}
\models {\ptsp}$.
\end{theorem}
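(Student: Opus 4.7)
The plan is to proceed by induction on the derivation of $S : \gamma \rightsquigarrow \gamma'$, performing case analysis on the structure of $S$ and, where relevant, on the last typing rule used. An outer layer of each case first strips off any outermost applications of $(\textit{csq}^p)$: if the typing $S:\ptsp_1 \ra \ptsp_2$ was obtained by $(\textit{csq}^p)$ from $S:\pts_1 \ra \pts_2$, then Lemma~\ref{lem1}(2) lets me promote $\gamma \models \ptsp_1$ to $\gamma \models \pts_1$ and, after completing the case analysis, demote $\gamma' \models \pts_2$ back to $\gamma' \models \ptsp_2$.

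The routine cases --- $x\coloneqq e$, $x \coloneqq \& y$, $\textit{skip}$, sequencing, and the two branches of the conditional --- are direct: $x\coloneqq e$ uses Lemma~\ref{lem1}(1) to place $\lb e\rb\gamma$ in $A$ whenever it is an address, and sequencing composes two applications of the IH through the intermediate state. For $x\coloneqq *y$ and $*x \coloneqq e$, the assumption $\gamma \models \pts$ combined with $\gamma(y) = z'$ (resp.\ $\gamma(x) = z'$) forces $z' \in \pts(y)$ (resp.\ $\pts(x)$), which singles out the relevant instance of the universally quantified premise of $(\coloneqq *^p)$ (resp.\ $(*\coloneqq^p)$) to which the IH applies. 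The $\while$ case is handled cleanly exactly because the outer induction is on the semantic derivation: one loop iteration unfolds into $S_t : \gamma \rightsquigarrow \gamma''$ followed by a strictly smaller $\while$ derivation, and the invariant $\pts$ supplied by $(\textit{whl}^p)$ is carried through both by the IH.

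The principal obstacle is the $(\textit{par}^p)$ case. The semantics supplies a permutation $\theta$ and intermediate states $\gamma = \gamma_1, \ldots, \gamma_{n+1} = \gamma'$ with $S_{\theta(i)} : \gamma_i \rightsquigarrow \gamma_{i+1}$, while typing supplies $S_i : \pts \cup \bigcup_{j \neq i} \pts_j \ra \pts_i$ for every $i$ and post-type $\cup_i \pts_i$. I will prove, by a secondary induction on $i$, the claim $\gamma_{i+1} \models \pts_{\theta(i)}$. For $i = 1$, monotonicity of $\cup$ gives $\pts \leq \pts \cup \bigcup_{j \neq \theta(1)} \pts_j$, so by Lemma~\ref{lem1}(2) the hypothesis $\gamma_1 \models \pts$ lifts to the pre-type of $S_{\theta(1)}$, and the outer IH delivers $\gamma_2 \models \pts_{\theta(1)}$. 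For the step, the key observation is that $\theta(i) \neq \theta(i+1)$ because $\theta$ is a permutation, whence $\pts_{\theta(i)} \leq \pts \cup \bigcup_{j \neq \theta(i+1)} \pts_j$; Lemma~\ref{lem1}(2) then turns $\gamma_{i+1} \models \pts_{\theta(i)}$ into the pre-type required for $S_{\theta(i+1)}$, and the outer IH yields $\gamma_{i+2} \models \pts_{\theta(i+1)}$. Finally, $\gamma' = \gamma_{n+1} \models \pts_{\theta(n)} \leq \cup_i \pts_i$ closes the case by one last appeal to Lemma~\ref{lem1}(2).

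The $(\textit{par-if}^p)$ and $(\textit{par-for}^p)$ cases reduce immediately to the $\textit{par}$ case, since both the semantic and typing rules for these constructs are defined by reduction to $\textit{par}$ --- after inlining the conditionals in the first case, and by choosing $\pts_1 = \cdots = \pts_n = \ptsp$ in the unfolding of $\textit{par-for}\{S\}$ into $n$ copies of $S$ in the second.
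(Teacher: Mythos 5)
Your proof is correct and follows essentially the same route as the paper's: Lemma~\ref{lem1} for the direct assignments, instantiation of the universally quantified premise at the $z^\prime$ determined by $\gamma$ for the indirect ones, the chained argument $\gamma_{i+1}\models\pts_{\theta(i)}\le \pts\cup\bigcup_{j\not=\theta(i+1)}\pts_j$ for $(\textit{par}^p)$, and reduction of $\textit{par-if}$ and $\textit{par-for}$ to $\textit{par}$. The only divergence is that you induct on the semantic derivation rather than on the type derivation, a sound variant of the same argument that in fact handles the $\while$ case (which the paper omits) more cleanly.
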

\begin{proof}
The proof is by structural induction on the type derivation. We
demonstrate some cases.
\begin{itemize}
\item The case of $(\coloneqq^p)$: In this case $\ptsp=pts[x\mapsto A]$
and $\gamma^\prime=\gamma[x\mapsto \lb e\rb\gamma]$. Therefore by
the previous lemma $\gamma \models \pts$ implies $\gamma^\prime
\models \ptsp$.
\item The case of $(*\coloneqq^p)$: In this case there exists
$z\in \textit{Var}$ such that $\gamma(x)=z^\prime$ and $z\coloneqq
e:\gamma\rightsquigarrow \gamma^\prime$. Because $\gamma \models
\pts,\ z^\prime\in \pts(x)$ and hence by assumption $z\coloneqq
e:\pts\ra \ptsp$. Therefore by soundness of $(\coloneqq^p)$,
$\gamma^\prime\models \ptsp$.
\item The case of $(par^p)$: In this case there exist a permutation
$\theta:\{1,\ldots,n\}\ra\{1,\ldots,n\}$ and $n+1$ states
$\gamma=\gamma_1,\ldots,\gamma_{n+1}=\gamma^\prime$ such that for
every ${1\le i\le n},\ S_{\theta(i)}:\gamma_i\ra \gamma_{i+1}$. Also
$\gamma_1\models \pts$ implies $\gamma_1\models \pts \cup
\cup_{j\not=\theta(1)} pts_j$. Therefore by the induction hypothesis
$\gamma_2\models \pts_{\theta(1)}$. This implies $\gamma_2\models
\pts \cup \cup_{j\not= \theta(2)}\pts_j$. Again by the induction
hypothesis we get $\gamma_3\models \pts_{\theta(2)}$. Therefore by a
simple induction on $n$, we can show that
$\gamma^\prime=\gamma_{n+1}\models \pts_{\theta(n)}$ which implies
${\gamma^\prime\models \pts^\prime=\cup_j\pts_j}$.
\item The case of $(par-for^p)$: In this case there exists $n$ such that $
\textit{par}\{\overbrace{\{S\},\ldots,\{S\}}^{n-times}\}:\gamma\rightsquigarrow
\gamma^\prime$. By induction hypothesis we have $S:\pts\cup\ptsp\ra
\ptsp$. By $(par^p)$ we conclude that
$\textit{par}\{\overbrace{\{S\},\ldots,\{S\}}^{n-times}\}:\pts\rightsquigarrow
\ptsp$. Therefore by the soundness of $(par^p)$,
$\gamma^\prime\models \ptsp$.
\end{itemize}
\end{proof}

\section{Live-variables analysis}\label{live}
In this section, we present a type system to perform live-variables
analysis for pointer programs with structured parallel constructs.
We start with defining live-variables:
\begin{definition}\label{livedef}
\begin{itemize}
\item A variable is \emph{usefully used} if it is used
\begin{itemize}
    \item as the operand of the unary operation $*$.
    \item in an assignment to a variable that is live at the end of the
    assignment, or
    \item in the guard of an if-statement or a while-statement,
\end{itemize}
\item A variable is \emph{live} at a program point if there
is a computational path from that program point during which the
variable gets usefully used before being modified.
\end{itemize}
\end{definition}
\begin{definition}\label{live types}
The set of \emph{live types} is denoted by $L$ and equal to
$\textit{PTS}\times\mathcal{P}(\textit{Var})$. The second component
of a live type is termed a \textit{live-component}. The subtyping
relation $\leq$ is defined as: ${(\pts,l)\leq(\ptsp,l^\prime)\ \df\
\pts\leq \ptsp \mbox{ and } l\supseteq l^\prime.}$
\end{definition}

The live-variables analysis is a backward analysis. For each program
point, this analysis specifies the set of variables that may be live
(according to the definition above) at that point.

Our type system for live-variables analysis is obtained as an
enrichment of the type system for pointer analysis, presented in the
previous section. Hence one can say that the type system presented
here is a strict extension of that presented above. This is so
because the result of pointer analysis is necessary to improve the
precision of the live-variables analysis. This also gives an
intuitive explanation of the definition of live types above.

The judgement of a statement $S$ has the form $S:(\pts,l)\ra
(\ptsp,l^\prime)$. The intuition of the judgement is that the
presence of live-variables at the post-state of an execution of $S$
in $l^\prime$ implies the presence of live-variables at the
pre-state of this execution in $l$. The intuition agrees with the
fact that live-variables analysis is a backward analysis and gives
an insight into the definition of $\gamma \models l$ below.

Suppose we have the set of variables $l^\prime $ that we have
interest in their values at the end of executing a statement $S$ and
the result of pointer analysis of $S$ (in the form $S:\pts\ra
\ptsp$). The live-variables analysis takes the form of a pre-type
derivation that calculates a set $l$ such that $S:(\pts,l)\ra
(\ptsp,l^\prime)$. The idea here is that by proceeding from the last
point of the program, calculating pre-types, we achieve the backward
analysis.

The inference rules for our type system for live-variables analysis
are as follows. {\footnotesize {
\[
\begin{prooftree}
x \coloneqq e: \pts\ra \ptsp \quad x\notin l^\prime\justifies x
\coloneqq e: (\pts,l^\prime)\ra
(\ptsp,l^\prime)\thickness=0.08em\using{(\coloneqq^l_1)}
\end{prooftree}\quad \begin{prooftree}
x \coloneqq e: \pts\ra \ptsp \quad x\in l^\prime\justifies x
\coloneqq e: (\pts,(l^\prime\setminus \{x\}) \cup FV(e))\ra
(\ptsp,l^\prime)\thickness=0.08em\using{(\coloneqq^l_2)}
\end{prooftree}
\]\[
\begin{prooftree}
\justifies x \coloneqq \& y: (\pts,l^\prime\setminus\{x\})\ra
(\pts[x\mapsto \{y^\prime\}],l^\prime)\thickness=0.08em
\using{(\coloneqq \&^l)}\end{prooftree} \quad
\begin{prooftree}
\justifies \textit{skip}: (\pts,l)\ra (\pts,l)\thickness=0.08em
\end{prooftree}
\]\[
\begin{prooftree} x \coloneqq *y: \pts\ra
\ptsp\quad x\notin l^\prime\justifies x \coloneqq *y:
(\pts,l^\prime\cup\{y\})\ra (\ptsp,l^\prime)\thickness=0.08em
\using{(\coloneqq *^l_1)}
\end{prooftree}\]\[ \begin{prooftree} x \coloneqq *y: \pts\ra
\ptsp\quad x\in l^\prime\justifies x \coloneqq *y:
(\pts,(l^\prime\setminus\{x\})\cup\{y,z\mid z^\prime\in
\pts(y)\})\ra (\ptsp,l^\prime)\thickness=0.08em \using{(\coloneqq
*^l_2)}
\end{prooftree}
\]\[
\begin{prooftree}
*x \coloneqq e: \pts\ra \ptsp \quad {\pts(x)\cap
l^\prime}={\emptyset} \justifies *x \coloneqq e:
(\pts,l^\prime\cup\{x\})\ra
(\ptsp,l^\prime)\thickness=0.08em\using{(*\coloneqq^l_1)}
\end{prooftree}\quad \begin{prooftree}
*x \coloneqq e: \pts\ra \ptsp \quad {\pts(x)\cap
l^\prime}\not={\emptyset} \justifies *x \coloneqq e:
(\pts,l^\prime\cup \textit{FV}(e)\cup\{x\})\ra
(\ptsp,l^\prime)\thickness=0.08em\using{(*\coloneqq^l_2)}
\end{prooftree}
\]\[
\begin{prooftree}
S_i:(\pts\cup \cup_{j\not=i} \pts_j,l_i)\ra (\pts_i,l^\prime\cup
\cup_{j\not=i} l_j)\justifies
\textit{par}\{\{S_1\},\ldots,\{S_n\}\}: (\pts,\cup_i l_i)\ra (\cup_i
\pts_i,l^\prime)\thickness=0.08em
\using{(\textit{par}^l)}\end{prooftree}
\]\[
\begin{prooftree} \textit{par}\{\{\mathit{if\ b_1\ then\ }S_1
\mathit{\ else\ skip}\},\ldots,\{\mathit{if\ b_n\ then\ }S_n
\mathit{\ else\ skip}\}\}: (\pts,l)\ra (\ptsp,l^\prime)\justifies
\textit{par-if}\{(b_1,S_1),\ldots,(b_n,S_n)\}: (\pts,l)\ra
(\ptsp,l^\prime)\thickness=0.08em
\using{(\textit{par-if}^l)}\end{prooftree}
\]\[
\begin{prooftree}
S: (\pts\cup \ptsp,l)\ra (\ptsp,l^\prime\cup l)\justifies
\textit{par-for}\{S\}: (\pts,l)\ra (\ptsp,l^\prime)\thickness=0.08em
\using{(\textit{par-for}^l)}\end{prooftree}
\]\[
\begin{prooftree} S_1: (\pts,l) \ra (\pts^{\prime\prime},l^{\prime\prime})
\quad S_2: (\pts^{\prime\prime},l^{\prime\prime}) \ra
(\ptsp,l^\prime)\justifies S_1;S_2: (\pts,l)\ra (\ptsp,l^\prime)
\thickness=0.08em\using{(seq^l)}
\end{prooftree}
\]\[
\begin{prooftree}
S_t,S_f:(\pts,l)\ra (\ptsp,l^\prime)\justifies \ifs: (\pts,l\cup
\textit{FV}(b))\ra (\ptsp,l^\prime)
\thickness=0.08em\using{(\textit{if}^l)}
\end{prooftree}
\]\[
\begin{prooftree}
{l}={l^\prime\cup \textit{FV}(b)}\quad S_t: (\pts,l^\prime)\ra
(\pts,l) \justifies \while: (\pts,l)\ra (\pts,l^\prime)
\thickness=0.08em\using{(\textit{whl}^l)}
\end{prooftree}
\]\[
\begin{prooftree}
(\ptsp_1,l_1^\prime)\leq (\pts_1,l_1)\quad S:(\pts_1,l_1) \ra
(\pts_2,l_2) \quad (\pts_2,l_2) \leq (\ptsp_2,l_2^\prime)\justifies
S: (\ptsp_1,l_1^\prime)\ra
(\ptsp_2,l_2^\prime)\thickness=0.08em\using{(\textit{csq}^l)}
\end{prooftree}
\]
}}

For the command $*x \coloneqq e$, we have two rules, namely
$(*\coloneqq^l_1)$ and $(*\coloneqq^l_2)$. In both cases,
calculating the pre-type from the post-type includes adding $x$ to
the post-type. This is so because according to
Definition~\ref{livedef}, $x$ is live at the pre-state of any
execution of the command. The rule $(*\coloneqq^l_1)$ deals with the
case that there is no possibility that the variable modified by this
statement is live ($\pts(x)\cap l^\prime = \emptyset$) at the end of
an execution. In this case there is no need to add any other
variables to the post-type. The rule $(*\coloneqq^l_2)$ deals with
the case that there is a possibility that the variable modified by
this statement is live ($\pts(x)\cap l^\prime \not= \emptyset$) at
the end of an execution. In this case, there is a possibility that
free variables of $e$ are used usefully according to
Definition~\ref{livedef}. Therefore free variables of $e$ are added
to the post-type. This gives an intuitive explanation for rules of
all the assignment commands. The intuition given in the previous
section for the rules $(\textit{par}^p)$ helps to understand the
rules for the parallel constructs, $(\textit{par}^l),\
(\textit{par-if}^l)$, and $(\textit{par-for}^l)$.

Towards proving the soundness of our type system for live-variables
analysis, we introduce necessary definitions and results.
\begin{definition}\label{statequi}
\begin{enumerate}
\item $\gamma\models_{l} \pts\ \df\
\forall x\in l.\ \gamma(x)\in \textit{Addrs} \Ra \gamma(x)\in
\pts(x).$
\item $\gamma\thicksim_{l} \gamma^\prime\ \df\
\forall x \in l.\ \gamma(x)=\gamma^\prime(x).$
\item $\gamma\thicksim_{(\pts,l)} \gamma^\prime\
  \df\ \gamma\models_{l} \pts,
  \gamma^\prime\models_{l} \pts$, and $ \gamma\thicksim_{l}
  \gamma^\prime.$
\end{enumerate}
\end{definition}
\begin{definition}\label{statetype}
The expression $\gamma \models l$ denotes the case when there is a
variable that is live at that state (computational point) and is not
included in $l$. A state $\gamma$ has type $(\pts,l)$, denoted by
$\gamma \models (\pts,l)$, if $\gamma\models_{l} \pts$ and
$\gamma\models l$.
\end{definition}

The following lemma is proved by structural induction on $e$ and
$b$.
\begin{lemma}
Suppose that $\gamma \mbox{ and }\gamma^\prime$ are states and $l
\mbox{ and }l^\prime\in\mathcal{P}(\textit{Var})$. Then
\begin{enumerate}
\item If $l\supseteq l^\prime$ and $\gamma
\thicksim_{l}\gamma^\prime$, then
$\gamma\thicksim_{l^\prime}\gamma^\prime$.

\item If ${l} ={l^\prime\cup FV(e)}\mbox{ and }\gamma
\thicksim_{l}\gamma^\prime$, then ${\lb e\rb \gamma} = {\lb e\rb
\gamma^\prime}$ and ${\gamma \thicksim_{l^\prime}\gamma^\prime}$.

\item If ${l} ={l^\prime\cup FV(b)} \mbox{ and }\gamma
\thicksim_{l}\gamma^\prime$, then ${\lb b\rb \gamma} = {\lb b\rb
\gamma^\prime}$ and ${\gamma \thicksim_{l^\prime}\gamma^\prime}$.
\end{enumerate}
\end{lemma}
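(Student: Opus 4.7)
The plan is to split the three items into one purely set-theoretic observation and two parallel structural inductions, and to note that the ``$\gamma\thicksim_{l'}\gamma'$'' conclusion in items~(2) and~(3) is a free consequence of item~(1).

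Item~(1) is immediate from the definition: $\gamma\thicksim_{l}\gamma'$ says $\gamma(x)=\gamma'(x)$ for every $x\in l$, and restricting the quantifier to the subset $l'\subseteq l$ preserves this. Using~(1), the secondary conclusion $\gamma\thicksim_{l'}\gamma'$ in items~(2) and~(3) follows at once, since $l'\subseteq l'\cup FV(e)=l$ (and analogously for $b$). Thus only the equalities $\lb e\rb\gamma=\lb e\rb\gamma'$ and $\lb b\rb\gamma=\lb b\rb\gamma'$ remain to be proved.

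For item~(2), I would proceed by structural induction on $e$. The base case $e=n$ is immediate from $\lb n\rb\gamma=n=\lb n\rb\gamma'$. For $e=x$, observe that $x\in FV(e)\subseteq l$, so $\gamma(x)=\gamma'(x)$ and hence $\lb x\rb\gamma=\lb x\rb\gamma'$. For $e=e_1\oplus e_2$, set $l_i=l'\cup FV(e_i)$ for $i=1,2$; since $l_i\subseteq l'\cup FV(e)=l$, item~(1) gives $\gamma\thicksim_{l_i}\gamma'$, so the induction hypothesis applied to each $e_i$ yields $\lb e_i\rb\gamma=\lb e_i\rb\gamma'$. The definition of $\lb e_1\oplus e_2\rb$ is a function of $\lb e_1\rb$ and $\lb e_2\rb$ (including the bottom value~$!$), so equality is preserved. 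Item~(3) follows the same template by structural induction on $b$: the base cases $\textit{true}$ and $\textit{false}$ are trivial; the cases $e_1=e_2$ and $e_1\le e_2$ invoke the just-proved item~(2) on each arithmetic subterm; and the cases $\neg b$, $b_1\wedge b_2$, $b_1\vee b_2$ invoke the Boolean induction hypothesis after splitting the free-variable set as above.

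I expect no genuine obstacle; the only point that requires a little care is the partiality of the semantics. When $\lb e_i\rb\gamma=!$, one must conclude $\lb e_i\rb\gamma'=!$ as well, but this is exactly the equality given by the induction hypothesis, and the definitions of $\oplus$, $=$, $\le$, $\wedge$, $\vee$, and $\neg$ propagate $!$ symmetrically, so each compound case collapses to a routine application of the inductive equalities.
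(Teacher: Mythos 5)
Your proposal is correct and follows the same route the paper indicates: the paper simply states that the lemma ``is proved by structural induction on $e$ and $b$,'' and your argument carries out exactly that induction, with item~(1) handled as an immediate consequence of the definition of $\thicksim_l$. Your additional care about propagation of the undefined value $!$ is a reasonable elaboration of a detail the paper leaves implicit.
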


The following lemma follows from Lemma~\ref{lem1}.
\begin{lemma}\label{lem2}
Suppose that $\gamma\models_{l} \pts, FV(e)\subseteq l, \mbox{and
}e:\pts \ra A$. Then
\[\lb e\rb \gamma\in \textit{Addrs} \Ra
\lb e\rb \gamma\in A.\]
\end{lemma}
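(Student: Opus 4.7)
The hint ``follows from Lemma~\ref{lem1}'' points toward reducing the partial hypothesis $\gamma\models_{l}\pts$ to the full hypothesis $\gamma'\models\pts$ required by the earlier lemma. So the plan is to manufacture an auxiliary state $\gamma^\star$ which (i) agrees with $\gamma$ on $FV(e)$, so that $\lb e\rb\gamma^\star = \lb e\rb\gamma$, and (ii) satisfies $\gamma^\star\models\pts$ outright, so that Lemma~\ref{lem1}(1) applies directly to $\gamma^\star$ and yields $\lb e\rb\gamma^\star\in A$ whenever that value lies in $\textit{Addrs}$.

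Concretely, I would set $\gamma^\star(x)=\gamma(x)$ for $x\in FV(e)$ and $\gamma^\star(x)=0$ for $x\notin FV(e)$; here $0\in\mathbb{Z}$, which is disjoint from $\textit{Addrs}$ by the standing assumption in Section~\ref{semantics}. Verifying $\gamma^\star\models\pts$ splits into two sub-cases: for $x\notin FV(e)$ the antecedent $\gamma^\star(x)\in\textit{Addrs}$ is false by construction, so the implication is vacuous; for $x\in FV(e)$ the hypothesis $FV(e)\subseteq l$ places $x$ in $l$, whence $\gamma\models_{l}\pts$ gives $\gamma(x)\in\pts(x)$ whenever $\gamma(x)\in\textit{Addrs}$, and $\gamma^\star(x)=\gamma(x)$ transfers this to $\gamma^\star$. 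Since the semantics of $e$ depends only on the values of its free variables, $\lb e\rb\gamma=\lb e\rb\gamma^\star$, and one application of Lemma~\ref{lem1}(1) to $\gamma^\star$ closes the argument.

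The only step requiring any care is the first sub-case of the $\gamma^\star\models\pts$ verification; the rest is bookkeeping. As a sanity check one can alternatively give a direct structural induction on $e$, whose only non-trivial case is $e=x$: then $FV(e)=\{x\}\subseteq l$ forces $x\in l$, $A=\pts(x)$, and the conclusion is immediate from the definition of $\models_l$; the constants-case and the $e_1\oplus e_2$ case are vacuous because $\lb n\rb\gamma\in\mathbb{Z}$ and $\lb e_1\oplus e_2\rb\gamma\in\mathbb{Z}\cup\{!\}$ can never lie in $\textit{Addrs}$. I prefer the reduction because it reuses Lemma~\ref{lem1} verbatim and matches the authors' stated intent.
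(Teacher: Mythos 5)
Your proof is correct and is essentially identical to the paper's: the paper defines the same auxiliary state $\gamma^\prime = \lambda x.\ \mbox{if } x\in FV(e) \mbox{ then } \gamma(x) \mbox{ else } 0$, observes $\lb e\rb\gamma = \lb e\rb\gamma^\prime$ and $\gamma^\prime\models\pts$, and invokes Lemma~\ref{lem1}. You merely spell out the verification of $\gamma^\star\models\pts$ that the paper leaves as ``not hard to see.''
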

\begin{proof}
Consider the state $\gamma^\prime$, where $\gamma^\prime= {\lambda
x.\mbox{ if } x\in FV(e) \mbox{ then } \gamma(x) \mbox{ else } 0}$.
It is not hard to see that $\lb e\rb \gamma = \lb e\rb
\gamma^\prime$ and $\gamma^\prime\models \pts$. Now by
Lemma~\ref{lem1}, $\lb e\rb \gamma^\prime \in \textit{Addrs}$
implies $\lb e\rb \gamma^\prime \in A$ which completes the proof.
\end{proof}

\begin{theorem} \label{livesoundness1}
\begin{enumerate}
    \item  $(\pts,l)\le (\ptsp,l^\prime) \Ra
    (\forall \gamma.\ {\gamma}\models_{l} {\pts} \Ra
    {\gamma}\models_{l^\prime} {\ptsp}$).
    \item  Suppose that $S:(\pts,l)\ra
    (\ptsp,l^\prime)$ and
$S:\gamma\rightsquigarrow \gamma^\prime$. Then ${\gamma}\models_{l}
{\pts}$ implies ${\gamma^\prime} \models_{l^\prime} {\ptsp}.$
    \item Suppose that $S:(\pts,l)\ra (\ptsp,l^\prime)$ and
$S:\gamma\rightsquigarrow \gamma^\prime$. Then $\gamma\models l$
implies $\gamma^\prime\models l^\prime$. This guarantees that if the
set of variables live at $\gamma^\prime$ is included in $l^\prime$,
then the set of variables live at $\gamma$ is included in $l$.
\end{enumerate}
\end{theorem}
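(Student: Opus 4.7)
The plan is to prove all three parts simultaneously by structural induction on the type derivation $S:(\pts,l)\ra(\ptsp,l^\prime)$. Part~1 unfolds directly from Definition~\ref{live types}: $(\pts,l)\le(\ptsp,l^\prime)$ means $\pts\le\ptsp$ and $l\supseteq l^\prime$, so any $x\in l^\prime$ already lies in $l$, whence $\gamma(x)\in\textit{Addrs}$ forces $\gamma(x)\in\pts(x)\subseteq\ptsp(x)$. This item also discharges the $(\textit{csq}^l)$ cases in Parts~2 and~3.

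For Part~2 the argument mirrors Theorem~\ref{soundness-points to} but works over $\models_l$ instead of $\models$. The rules $(\coloneqq^l_1)$ and $(\coloneqq^l_2)$ rely on the fact that $\textit{FV}(e)\subseteq l$ (forced whenever $x\in l^\prime$, vacuous otherwise) together with Lemma~\ref{lem2} to conclude $\lb e\rb\gamma\in A$ whenever that value is an address; the rule $(\coloneqq\&^l)$ is immediate. The dereference rules $(\coloneqq *^l_{1,2})$ and $(*\coloneqq^l_{1,2})$ reduce, just as in the points-to soundness proof, to their simple-assignment counterparts by reading $\gamma(y)=z^\prime$ (respectively $\gamma(x)=z^\prime$) off from $\gamma\models_l\pts$ to activate the embedded premise. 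The cases $(\textit{seq}^l)$, $(\textit{if}^l)$, and $(\textit{whl}^l)$ are straightforward chainings of the induction hypothesis, with the preceding lemma on $\thicksim_l$-invariance handling guard and expression evaluations.

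The principal obstacle is rule $(\textit{par}^l)$, whose soundness requires an inner induction on the permutation $\theta$ witnessing the semantic step. Starting from $\gamma_1\models_{\cup_i l_i}\pts$, one proves by induction on $i$ that after executing $S_{\theta(1)},\ldots,S_{\theta(i)}$ the intermediate state $\gamma_{i+1}$ satisfies $\gamma_{i+1}\models_{l_{\theta(i+1)}}\pts\cup\cup_{j\ne\theta(i+1)}\pts_j$, exactly the precondition needed to invoke the induction hypothesis on the next thread. Two inclusions drive the step: the post-live-component $l^\prime\cup\cup_{j\ne\theta(i)}l_j$ produced by $S_{\theta(i)}$ contains $l_{\theta(i+1)}$ (because $\theta(i)\ne\theta(i+1)$), and its post-pts component $\pts_{\theta(i)}$ is absorbed into $\cup_{j\ne\theta(i+1)}\pts_j$ for the same reason. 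The terminal step yields $\gamma^\prime\models_{l^\prime}\pts_{\theta(n)}\subseteq\cup_i\pts_i$. The rules $(\textit{par-if}^l)$ and $(\textit{par-for}^l)$ then reduce to $(\textit{par}^l)$ via the semantic rewritings given in Section~\ref{semantics}.

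Part~3 proceeds by a parallel structural induction, verifying at each rule that the set of variables live at $\gamma$ is captured by $l$ under the assumption that liveness at $\gamma^\prime$ is captured by $l^\prime$. The assignment cases appeal directly to Definition~\ref{livedef}: in $(\coloneqq^l_2)$, a variable live at $\gamma$ is either used usefully by the assignment (placing it in $\textit{FV}(e)$, since $x\in l^\prime$ makes the assignment itself live) or distinct from $x$ and live at $\gamma^\prime$ (placing it in $l^\prime\setminus\{x\}$); the pointer-assignment cases branch according to whether $\pts(x)\cap l^\prime$ is empty, matching the split between $(*\coloneqq^l_1)$ and $(*\coloneqq^l_2)$. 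For $(\textit{par}^l)$ the pre-live-component $\cup_i l_i$ is engineered precisely to absorb the worst-case interleaving in which a single thread is scheduled first: any variable live at $\gamma$ along a path beginning with $S_{\theta(1)}$ is live along $S_{\theta(1)}$ under the post-live-component $l^\prime\cup\cup_{j\ne\theta(1)}l_j$, and the induction hypothesis on $S_{\theta(1)}$ then places it in $l_{\theta(1)}\subseteq\cup_i l_i$.
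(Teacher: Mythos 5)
Your proposal is correct and follows essentially the same route as the paper: Part~1 by unfolding the subtyping definition, Part~2 by structural induction on the type derivation with the dereference rules reduced to their simple-assignment counterparts via $\gamma\models_l\pts$, the $(\textit{par}^l)$ case handled by an inner induction along the permutation using exactly the two inclusions you identify, and Part~3 by a parallel structural induction (which the paper merely declares straightforward). No substantive differences to report.
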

\begin{proof}
\begin{enumerate}
    \item Suppose ${\gamma}\models_{l} {\pts}$.
    This implies ${\gamma}\models_{l^\prime} {\pts}$
    because $l^\prime \subseteq l$. The last
    fact implies ${\gamma}\models_{l^\prime} {\ptsp}$
    because $\pts\le \ptsp$.
    \item The proof is by induction on the structure of
    type derivation. We show some cases.
\begin{enumerate}
\item The type derivation has the form $(\coloneqq^l_1)$.
In this case, $\ptsp = \pts[x\mapsto A]$ and
${\gamma^\prime}={\gamma[x\mapsto \lb e\rb \gamma]}$. Therefore
${\gamma}\models_{l^\prime} {\pts}$ implies
${\gamma^\prime}\models_{l^\prime} {\ptsp}$ because $x\notin
l^\prime$.
\item The type derivation has the form $(\coloneqq^l_2)$.
In this case, $e:\pts\ra A,\ \ptsp = {\pts[ x\mapsto A]},\
\gamma^\prime=\gamma[x \mapsto\lb e\rb \gamma]$, and
$l=(l^\prime\setminus \{x\}) \cup FV(e)$. Therefore by
Lemma~\ref{lem2} it is not hard to see
${\gamma^\prime}\models_{l^\prime} {\ptsp}$.
\item The type derivation has the form $(\coloneqq *^l_1)$.
In this case, for every $z^\prime \in \pts(y)$, we have $x\coloneqq
z: \pts\ra\ptsp, \gamma(y)=z^\prime$, and $x\coloneqq z:\gamma \ra
\gamma^\prime$. We have $z^\prime\in \pts(y)$, because $y\in l$ and
$\gamma\models_l\pts$. Therefore by $(\coloneqq^l_1)$, we have
$x\coloneqq z: (\pts,l^\prime)\ra(\ptsp,l^\prime)$. Now
$\gamma\models_l\pts$ amounts to $\gamma\models_{l^\prime}\pts$.
Hence we get $\gamma^\prime\models_{l^\prime}\ptsp$ by soundness of
$(\coloneqq^l_1)$.
\item The type derivation has the form $(\coloneqq *^l_2)$.
In this case, for every $z^\prime \in \pts(y)$, we have $ x\coloneqq
z: \pts\ra\ptsp, \gamma(y)=z^\prime,\ x\coloneqq z:\gamma \ra
\gamma^\prime,$ and $l=(l^\prime\setminus \{x\}) \cup \{y,z\mid
z^\prime\in \pts(y)\}$. We have $z\in\pts(y)$ because
$\gamma\models_l \pts$ and $y\in l$. Therefore by $(\coloneqq^l_2)$
we have $x\coloneqq z:
(\pts,(l^\prime\setminus\{x\})\cup\{z\})\ra(\ptsp,l^\prime)$.
$\gamma\models_l \pts$ implies
$\gamma\models_{(l^\prime\setminus\{x\})\cup\{z\}} \pts$. Hence by
soundness of $(\coloneqq^l_2)$, we get
$\gamma^\prime\models_{l^\prime}\ptsp$.
\item The type derivation has the form $(*\coloneqq^l_1)$.
In this case, for every $z^\prime \in \pts(x)$, we have $z\coloneqq
e: \pts\ra\ptsp, \gamma(x)=z^\prime$, and $z\coloneqq e:\gamma \ra
\gamma^\prime$. We have $z^\prime\in \pts(x)$, because $x\in l$ and
$\gamma\models_l\pts$. Therefore by $(\coloneqq^l_1)$, we have
$z\coloneqq e: (\pts,l^\prime)\ra(\ptsp,l^\prime)$ because
$\pts(x)\cap l^\prime =\emptyset$. Now $\gamma\models_l\pts$ amounts
to $\gamma\models_{l^\prime}\pts$. Hence we get
$\gamma^\prime\models_{l^\prime}\ptsp$ because $z\coloneqq e:
(\pts,l^\prime)\ra(\ptsp,l^\prime)$ and by soundness of
$(\coloneqq^l_1)$.
\item The type derivation has the form $(*\coloneqq^l_2)$.
In this case, for every $z^\prime \in \pts(x)$, we have $z\coloneqq
e: \pts\ra\ptsp, \gamma(x)=z^\prime,\ z\coloneqq e:\gamma \ra
\gamma^\prime,$ and $l=l^\prime \cup FV(e)\cup\{x\}$. We have
$z\in\pts(x)$ because $\gamma\models_l \pts$ and $x\in l$. Therefore
by $(\coloneqq^l_2)$ we have $x\coloneqq z:
(\pts,(l^\prime\setminus\{z\})\cup FV(e))\ra(\ptsp,l^\prime)$.
$\gamma\models_l \pts$ implies
$\gamma\models_{(l^\prime\setminus\{z\})\cup FV(e)} \pts$. Hence by
soundness of $(\coloneqq^l_2)$, we get
$\gamma^\prime\models_l^\prime\ptsp$.
\item The type derivation has the form $(\textit{par}^l)$.
In this case there exist a permutation
$\theta:\{1,\ldots,n\}\ra\{1,\ldots,n\}$ and $n+1$ states
$\gamma=\gamma_1,\ldots,\gamma_{n+1}=\gamma^\prime$ such that for
every $1\le i\le n,\ S_{\theta(i)}:\gamma_i\ra \gamma_{i+1}$. Also
$\gamma_1\models_l \pts$ implies $\gamma_1\models_{l_{\theta(1)}}
\pts \cup \cup_{j\not=\theta(1)} pts_j$. Therefore by the induction
hypothesis $\gamma_2\models_{l^\prime\cup \cup_{j\not=\theta(1)}
l_j} \pts_{\theta(1)}$. This implies
$\gamma_2\models_{l_{\theta(2)}} \pts \cup \cup_{j\not=
\theta(2)}\pts_j$. Again by the induction hypothesis we get
$\gamma_3\models_{l^\prime\cup \cup_{j\not=\theta(2)}l_j}
\pts_{\theta(2)}$. Therefore by a simple induction on $n$, we can
show that $\gamma^\prime=\gamma_{n+1}\models_{l^\prime\cup
\cup_{j\not=\theta(n)}l_j} \pts_{\theta(n)}$ which implies
${\gamma^\prime\models_{l^\prime} \pts^\prime=\cup_j\pts_j}$.
\item The type derivation has the form $(\textit{par-for}^l)$:
In this case there exists $n$ such that $
\textit{par}\{\overbrace{\{S\},\ldots,\{S\}}^{n-times}\}:\gamma\rightsquigarrow
\gamma^\prime$. By induction hypothesis we have
$S:(\pts\cup\ptsp,l)\ra (\ptsp,l\cup l^\prime)$. By $(par^l)$ we
conclude that
$\textit{par}\{\overbrace{\{S\},\ldots,\{S\}}^{n-times}\}:(\pts,l)\ra
(\ptsp,l^\prime)$. Therefore by soundness of $(par^l)$, we get
$\gamma^\prime\models_{l^\prime} \ptsp$.
\end{enumerate}
%The remaining cases are straightforward to check.
\item The proof is also by induction on the structure of
type derivation and it is straightforward.
\end{enumerate}
\end{proof}

The proof of the following corollary follows from
Theorem~\ref{livesoundness1}.
\begin{corollary}
Suppose $S:\gamma\rightsquigarrow \gamma^\prime\mbox{ and }
S:(\pts,l) \ra (\ptsp,l^\prime)$. Then $\gamma\models (\pts,l)$
implies $\gamma^\prime\models (\ptsp,l^\prime)$.
\end{corollary}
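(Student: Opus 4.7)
The plan is to observe that this corollary is essentially a direct packaging of parts (2) and (3) of Theorem~\ref{livesoundness1}, so the proof reduces to unfolding Definition~\ref{statetype} on both sides and invoking those two items separately.

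First I would unpack the hypothesis $\gamma \models (\pts,l)$ using Definition~\ref{statetype}, which gives two conjuncts: $\gamma \models_l \pts$ and $\gamma \models l$ (the latter meaning that every variable live at $\gamma$ is contained in $l$). I would likewise record that the goal $\gamma^\prime \models (\ptsp, l^\prime)$ reduces to showing $\gamma^\prime \models_{l^\prime} \ptsp$ together with $\gamma^\prime \models l^\prime$.

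Next, from $S:(\pts,l) \ra (\ptsp, l^\prime)$, $S:\gamma \rightsquigarrow \gamma^\prime$, and $\gamma \models_l \pts$, I would apply Theorem~\ref{livesoundness1}(2) to obtain $\gamma^\prime \models_{l^\prime} \ptsp$. Then, using the same two operational and typing premises plus $\gamma \models l$, I would apply Theorem~\ref{livesoundness1}(3) to obtain $\gamma^\prime \models l^\prime$. Conjoining the two conclusions and re-folding Definition~\ref{statetype} yields $\gamma^\prime \models (\ptsp, l^\prime)$, which is exactly what is required.

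There is no real obstacle here; the only thing to be slightly careful about is that the two halves of $\gamma \models (\pts,l)$ really are the hypotheses of items (2) and (3) respectively, and that the two corresponding conclusions really do assemble into $\gamma^\prime \models (\ptsp, l^\prime)$. Since both items of Theorem~\ref{livesoundness1} were proved by induction on the structure of the type derivation, no further induction is needed at the corollary level; the argument is a short two-line deduction.
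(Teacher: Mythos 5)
Your proposal is correct and matches the paper's intent exactly: the paper simply notes that the corollary ``follows from Theorem~\ref{livesoundness1},'' and your unfolding of $\gamma\models(\pts,l)$ via Definition~\ref{statetype} into the hypotheses of items (2) and (3) of that theorem is precisely the intended two-line deduction.
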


\begin{theorem} \label{soundness-live}
Suppose that $S:(\pts,l) \ra (\ptsp,l^\prime),\
S:\gamma\rightsquigarrow \gamma^\prime,\
\gamma\thicksim_{(\pts,l)}\gamma_*,$ and $S$ does not abort at
$\gamma_*$. Then there exists a state $\gamma_*^\prime$ such that
$S:\gamma_*\ra\gamma_*^\prime$ and $\gamma^\prime
\thicksim_{(\ptsp,l^\prime)}\gamma_*^\prime$.
\end{theorem}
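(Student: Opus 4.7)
The plan is to prove the theorem by induction on the derivation of $S:\gamma\rightsquigarrow\gamma^\prime$, with a case analysis driven by the last rule of the type derivation $S:(\pts,l)\ra(\ptsp,l^\prime)$. For each case I would (i) use the assumption that $S$ does not abort at $\gamma_*$ to show that the matching non-aborting semantic rule applies at $\gamma_*$ and thus construct $\gamma_*^\prime$, and (ii) verify $\gamma^\prime\thicksim_{(\ptsp,l^\prime)}\gamma_*^\prime$ by combining Theorem~\ref{livesoundness1}(2) (which already gives both post-states type $\ptsp$ on $l^\prime$) with a direct agreement check on $l^\prime$.

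The assignment cases are mostly routine. For $(\coloneqq^l_2)$, the inclusion $FV(e)\subseteq l$ together with the preceding lemma forces $\lb e\rb\gamma=\lb e\rb\gamma_*$, so both executions store the same value in $x$. For the dereference and indirect-write rules, $y\in l$ (resp.\ $x\in l$) together with $\gamma\thicksim_l\gamma_*$ pins down identical target cells at the two states; in $(*\coloneqq^l_1)$ the side condition $\pts(x)\cap l^\prime=\emptyset$ ensures the modified cell lies outside $l^\prime$, and in $(*\coloneqq^l_2)$ the inclusion $FV(e)\subseteq l$ again forces equal stored values. Rules $(\textit{if}^l)$ and $(\textit{whl}^l)$ both use $FV(b)\subseteq l$ to fire the same semantic branch on $\gamma$ and $\gamma_*$; in $(\textit{whl}^l)$ the recursive "true" case appeals to the primary induction hypothesis on the strictly shorter semantic subderivation for the inner while. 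Rule $(\textit{csq}^l)$ is handled by Theorem~\ref{livesoundness1}(1) together with the downward-closure of $\thicksim_l$ in $l$ proved in the preceding lemma.

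The main obstacle is $(\textit{par}^l)$. The dagger rule supplies a permutation $\theta$ and intermediate states $\gamma=\gamma_1,\ldots,\gamma_{n+1}=\gamma^\prime$ with $S_{\theta(i)}:\gamma_i\rightsquigarrow\gamma_{i+1}$. I would build the parallel sequence $\gamma_*=\gamma_{*,1},\ldots,\gamma_{*,n+1}=\gamma_*^\prime$ using the same $\theta$, by a secondary induction on $k\le n$ maintaining the invariant
\[
\gamma_{k+1}\thicksim_{\bigl(\pts\cup\cup_{j\notin\theta[\{1,\ldots,k\}]}\pts_j,\ l^\prime\cup\cup_{j\notin\theta[\{1,\ldots,k\}]}l_j\bigr)}\gamma_{*,k+1}.
\]
At step $k+1$ this invariant supplies exactly the pre-type hypothesis $(\pts\cup\cup_{j\ne\theta(k+1)}\pts_j,\,l_{\theta(k+1)})$ demanded by the IH on thread $S_{\theta(k+1)}$; the non-abort hypothesis needed to invoke the IH follows from the global assumption, because if $S_{\theta(k+1)}$ aborted at $\gamma_{*,k+1}$ then $\beta=\theta|_{\{1,\ldots,k+1\}}$ together with the sequence already built would witness the aborting rule $\ddagger$ for $S$, a contradiction. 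At $k=n$ the invariant collapses to $\gamma^\prime\thicksim_{(\cup_i\pts_i,\,l^\prime)}\gamma_*^\prime$, which is the desired conclusion. The constructs $(\textit{par-if}^l)$ and $(\textit{par-for}^l)$ then reduce to $(\textit{par}^l)$ through their respective semantic rules, exactly as done for the pointer analysis in Theorem~\ref{soundness-points to}.
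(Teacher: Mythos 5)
Your proposal is correct and follows essentially the same route as the paper's proof: the same explicit witnesses (e.g.\ $\gamma_*^\prime=\gamma_*[x\mapsto\lb e\rb\gamma_*]$) in the assignment cases, and for $(\textit{par}^l)$ the same reuse of the permutation $\theta$ with the iterated invariant relating $\gamma_{k+1}$ and $\gamma_{*,k+1}$ at the type $(\pts\cup\bigcup_{j}\pts_j,\,\cdot)$ of the remaining threads. You additionally supply two details the paper leaves implicit --- propagating the non-abort hypothesis into the individual threads via the $\ddagger$ rule, and measuring the induction on the semantic derivation so the while-true case goes through --- but these are refinements of the same argument rather than a different one.
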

\begin{proof}
The proof is by induction on structure of type derivation. We
demonstrate some cases:
\begin{enumerate}
\item The type derivation has one of the forms
$(\coloneqq^l_1)$ and $(\coloneqq^l_2)$. In this case, $\ptsp =
\pts[x\mapsto A]$ and $\gamma^\prime=\gamma[x\mapsto\lb e\rb
\gamma]$. We take $\gamma_*^\prime=\gamma_*[x\mapsto\lb e\rb
\gamma_*]$.
\item The type derivation has the form $(\coloneqq *^l_1)$ or
$(\coloneqq *^l_2)$. In this case, $\forall z^\prime \in \pts(y)$,
we have $x\coloneqq z: \pts\ra\ptsp, \gamma(y)=z^\prime$, and
$x\coloneqq z:\gamma \ra \gamma^\prime$. We set
${\gamma_*^\prime}={\gamma_*[x\mapsto \gamma_*(z)]}$.
\item The type derivation has one of the forms $(\coloneqq *^l_1)$ and
$(\coloneqq *^l_2)$. In this case, $\forall z^\prime \in \pts(x)$,
we have $z\coloneqq e: \pts\ra\ptsp, \gamma(x)=z^\prime$, and
$z\coloneqq e:\gamma \ra \gamma^\prime$. We let
${\gamma_*^\prime}={\gamma_*[z\mapsto\lb e\rb \gamma_*]}$
\item The type derivation has the form $(\textit{par}^l)$.
In this case there exist a permutation
$\theta:\{1,\ldots,n\}\ra\{1,\ldots,n\}$ and $n+1$ states
$\gamma=\gamma_1,\ldots,\gamma_{n+1}=\gamma^\prime$ such that for
every $1\le i\le n,\ S_{\theta(i)}:\gamma_i\ra \gamma_{i+1}$. We
refer to $\gamma_*$ as $\gamma_{*1}$.  We have
$\gamma_1\thicksim_{(\pts,\cup_i l_i)}\gamma_{*1}$ which implies
$\gamma_1\thicksim_{(\pts\cup\cup_{j\not=\theta(1)}\pts_j,l_{\theta(1)})}\gamma_{*1}$.
Therefore by induction hypothesis, there exists $\gamma_{*2}$ such
that $S_{\theta(1)}:\gamma_{*1}\ra\gamma_{*2}$ and
$\gamma_2\thicksim_{(\pts_{\theta(1)},l^\prime\cup \cup_j l_{j\not
=\theta(1)})}\gamma_{*2}$ which implies
$\gamma_2\thicksim_{(\pts\cup\cup_{j\not=\theta(2)}\pts_j,l_{\theta(2)})}\gamma_{*2}$.
Therefore a simple induction on $n$ proves the required.
\end{enumerate}
\end{proof}

\section{Dead code elimination}\label{optimization}
This section introduces a type system for dead code elimination.
Given a program and a set of variables whose values concern us at
the end of the program, there may be some code in the program that
has no effect on the values of these variables. Such code is called
\textit{dead code}.  The type system presented here aims at
optimizing structured parallel programs with pointer constructs via
eliminating dead code. In the form of a type derivation, the type
system associates each optimization with a proof for the soundness
of the optimization. Optimizing a program may result in correcting
it i.e. preventing it from aborting. Of course this happens if the
removed dead code is the only cause of abortion.

The type system presented here has judgements of the the form: $S:
(\pts,l)\ra (\ptsp,l^\prime)\hookrightarrow S^\prime.$ The intuition
is that $S^\prime$ optimizes $S$ towards dead code elimination (and
may be program correction). As mentioned early in many occasions,
the derivation of such judgement provides a justification for the
optimization process. The form of the judgement makes it apparent
that the type system presented in this section is built on the type
system for live-variables analysis.
\begin{figure}[htb]
{\footnotesize{\begin{itemize}
\item[] Algorithm: $\textit{parallel-optimize}$
\item[-] Input  : a statement $S$ of the language presented in Section~\ref{lang}
and a set of variables $l^\prime$ that we consider live (their
values concern us) at the end of executing $S$;
\item[-] Output: an optimized and may be corrected version $S^\prime$ of $S$
such that the relation between $S$ and $S^\prime$ is as stated
in~Theorem~\ref{sounddead}.
\item[-] Method  :\begin{enumerate}
    \item Find $\pts$ such that $S:\bot\ra \pts$ in the type system for pointer analysis.
    \item Find $l$ such that $S:(\bot,l)\ra (\pts,l^\prime)$ in
    the type system for live-variables analysis.
    \item Find $S^\prime$ such that $S:(\bot,l)\ra (\pts,l^\prime)\hookrightarrow
    S^\prime$ in the type system for dead code elimination.
\end{enumerate}
\end{itemize}
    \caption{The algorithm $\textit{optimize-parallel}$}\label{optimize}
}}\end{figure}

Figure~\ref{optimize} outlines an algorithm,
\textit{parallel-optimize}, that summarizes the optimization
process. A pointer analysis that annotates the points of the input
program with pointer information is the first step of the algorithm.
This step takes the form of a post type derivation of $S$, in our
type system for pointer analysis, using the bottom points-to type
$\bot=\{x\mapsto\emptyset\mid x\in \textit{Var}\}$ as the pre type.
Secondly, the algorithm refines the pointer information obtained in
the first step via annotating the pointer types with type components
for live-variables. Using our type systems for live-variables
analysis, this is done via a pre type derivation of $S$ for the set
$l^\prime$, the set of variables whose values concerns us at the end
of execution, as the post type. Finally, the information obtained so
far is utilized in the third step to find $S^\prime$ via using the
type system for dead code elimination proposed in this section.
Applying this algorithm to the program on the left-hand side of
Figure~\ref{example1} results in the program on the right-hand side
of the same figure. The details of this application is a simple
exercise.

The inference rules of our type system for dead code elimination are
as follows:  {\footnotesize {
\[
\begin{prooftree}
x \coloneqq e: \pts\ra \ptsp \quad x\notin l^\prime\justifies x
\coloneqq e: (\pts,l^\prime)\ra (\ptsp,l^\prime)\hookrightarrow
\textit{skip}\thickness=0.08em\using{(\coloneqq^e_1)}
\end{prooftree}\]\[ \begin{prooftree}
x \coloneqq e: \pts\ra \ptsp \quad x\in l^\prime\justifies x
\coloneqq e: (\pts,(l^\prime\setminus\{x\})\cup \textit{FV}(e))\ra
(\ptsp,l^\prime)\hookrightarrow x \coloneqq
e\thickness=0.08em\using{(\coloneqq^e_2)}
\end{prooftree}
\]\[
\begin{prooftree}
x\notin l^\prime\justifies x \coloneqq \& y: (\pts,l^\prime)\ra
(\pts[x\mapsto \{y^\prime\}],l^\prime)\hookrightarrow
\textit{skip}\thickness=0.08em \using{(\coloneqq
\&^e_1)}\end{prooftree}\quad
\begin{prooftree}
\justifies \textit{skip}: (\pts,l)\ra (\pts,l)\hookrightarrow
\textit{skip}\thickness=0.08em
\end{prooftree}
\]\[
\begin{prooftree} x\in l^\prime\justifies x \coloneqq \& y:
(\pts,l^\prime\setminus\{x\})\ra (\pts[x\mapsto
\{y^\prime\}],l^\prime)\hookrightarrow x \coloneqq \&
y\thickness=0.08em \using{(\coloneqq \&^e_2)}\end{prooftree}
\]\[
\begin{prooftree} x \coloneqq *y: \pts\ra
\ptsp\quad x\notin l^\prime\justifies x \coloneqq *y:
(\pts,l^\prime\cup\{y\})\ra (\ptsp,l^\prime)\hookrightarrow
\textit{skip}\thickness=0.08em \using{(\coloneqq *^e_1)}
\end{prooftree}
\]\[
\begin{prooftree} x \coloneqq *y: \pts\ra
\ptsp\quad x\in l^\prime\justifies x \coloneqq *y:
(\pts,(l^\prime\setminus\{x\})\cup\{y,z\mid z^\prime\in
\pts(y)\})\ra (\ptsp,l^\prime)\hookrightarrow x \coloneqq
*y\thickness=0.08em \using{(\coloneqq *^e_2)}
\end{prooftree}
\]\[
\begin{prooftree}
*x \coloneqq e: \pts\ra \ptsp \quad {\pts(x)\cap l}={\emptyset}
\justifies *x \coloneqq e: (\pts,l^\prime\cup\{x\})\ra
(\ptsp,l^\prime)\hookrightarrow
\textit{skip}\thickness=0.08em\using{(*\coloneqq^e_1)}
\end{prooftree}
\]\[
\begin{prooftree}
*x \coloneqq e: \pts\ra \ptsp \quad {\pts(x)\cap
l^\prime}\not={\emptyset} \justifies *x \coloneqq e:
(\pts,l^\prime\cup\{x\}\cup \textit{FV}(e))\ra
(\ptsp,l^\prime)\hookrightarrow *x \coloneqq
e\thickness=0.08em\using{(*\coloneqq^e_1)}
\end{prooftree}
\]\[
\begin{prooftree}
S_i:(\pts\cup \cup_{j\not=i} \pts_j,l_i)\ra (\pts_i,l^\prime\cup
\cup_{j\not=i} l_j)\hookrightarrow S_i^\prime\justifies
\textit{par}\{\{S_1\},\ldots,\{S_n\}\}: (\pts,\cup_i l_i)\ra (\cup_i
\pts_i,l^\prime)\hookrightarrow\textit{par}\{\{S_1^\prime\},
\ldots,\{S_n^\prime\}\}\thickness=0.08em
\using{(par^e)}\end{prooftree}
\]\[
\begin{prooftree}
\begin{tabular}{l}
$\textit{par}\{\{\mathit{if\ b_1\ then\ }S_1 \mathit{\ else\
skip}\},\ldots,\{\mathit{if\ b_n\ then\ }S_n \mathit{\
else\ skip}\}\}: (\pts,l)\ra (\ptsp,l^\prime)$\\
$\hookrightarrow \textit{par}\{\{\mathit{if\ b_1\ then\ }S_1^\prime
\mathit{\ else\ skip}\},\ldots,\{\mathit{if\ b_n\ then\ }S_n^\prime
\mathit{\ else\ skip}\}\}$
\end{tabular}
\justifies \begin{tabular}{l}
$\textit{par-if}\{(b_1,S_1),\ldots,(b_n,S_n)\}:
(\pts,l)\ra (\ptsp,l^\prime)$ \\
$\hookrightarrow\textit{par-if}\{(b_1,S_1^\prime),\ldots,(b_n,S_n^\prime)\}$
\end{tabular}
\thickness=0.08em \using{(\textit{par-if}^e)}\end{prooftree}
\]\[
\begin{prooftree}
S: (\pts\cup \ptsp,l)\ra (\ptsp,l^\prime\cup l)\hookrightarrow
S^\prime\justifies \textit{par-for}\{S\}: (\pts,l)\ra
(\ptsp,l^\prime)\hookrightarrow
\textit{par-for}\{S^\prime\}\thickness=0.08em
\using{(\textit{par-for}^e)}\end{prooftree}
\]\[
\begin{prooftree} S_1: (\pts,l) \ra (\pts^{\prime\prime},l^{\prime\prime})
\hookrightarrow S_1^\prime\quad S_2:
(\pts^{\prime\prime},l^{\prime\prime}) \ra
(\ptsp,l^\prime)\hookrightarrow S_2^\prime\justifies S_1;S_2:
(\pts,l)\ra (\ptsp,l^\prime)\hookrightarrow S_1^\prime;S_2^\prime
\thickness=0.08em\using{(seq^e)}
\end{prooftree}
\]\[
\begin{prooftree}
S_t:(\pts,l)\ra (\ptsp,l^\prime)\hookrightarrow S_t^\prime \quad
S_f:(\pts,l)\ra (\ptsp,l^\prime)\hookrightarrow S_f^\prime\justifies
\ifs: (\pts,l\cup \textit{FV}(b))\ra (\ptsp,l^\prime)\hookrightarrow
\mathit{if\ b\ then\ }S_t^\prime \mathit{\ else\
}S_f^\prime\thickness=0.08em\using{(\textit{if}^e)}
\end{prooftree}
\]\[
\begin{prooftree}
{l}={l^\prime\cup \textit{FV}(b)}\quad S_t: (\pts,l^\prime)\ra
(\pts,l)\hookrightarrow S_t^\prime \justifies \while: (\pts,l)\ra
(\pts,l^\prime)\hookrightarrow\mathit{while\ b\ do\ } S_t^\prime
\thickness=0.08em\using{(\textit{whl}^e)}
\end{prooftree}
\]\[
\begin{prooftree}
(\ptsp_1,l_1^\prime)\leq (\pts_1,l_1)\quad S:(\pts_1,l_1) \ra
(\pts_2,l_2)\hookrightarrow S^\prime \quad (\pts_2,l_2) \leq
(\ptsp_2,l_2^\prime)\justifies S: (\ptsp_1,l_1^\prime)\ra
(\ptsp_2,l_2^\prime)\hookrightarrow
S^\prime\thickness=0.08em\using{(\textit{csq}^e)}
\end{prooftree}
\]
}}

When optimizing programs it is important to guarantee that if (a)
the original and optimized programs are executed in similar states,
and (b) the original program ends at a state (rather than
\textit{abort}), then (a) the optimized program does not abort as
well, and (b) the optimized program reaches a state similar to that
reached by the original program. Indeed, this is guaranteed by the
following theorem.

\begin{theorem} \( (Soundness)\)\label{sounddead}
Suppose that $S:(\pts,l) \ra (\ptsp,l^\prime)\hookrightarrow
    S^\prime$ and $\gamma \thicksim_{(\pts,l)}\gamma_*$. Then\begin{enumerate}
\item If $S:\gamma\rightsquigarrow \gamma^\prime$, then there exists a state
$\gamma_*^\prime$ such that $S^\prime:\gamma_*\ra\gamma_*^\prime$
and $\gamma^\prime \thicksim_{(\ptsp,l^\prime)}\gamma_*^\prime$.
\item If $S^\prime:\gamma_*\ra\gamma_*^\prime$
and $S$ does not abort at $\gamma$, then there exists a state
$\gamma^\prime$ such that $S:\gamma\rightsquigarrow\gamma^\prime$
and $\gamma^\prime \thicksim_{(\ptsp,l^\prime)}\gamma_*^\prime$.
\end{enumerate}
\end{theorem}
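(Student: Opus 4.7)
The proof will proceed by structural induction on the type derivation $S:(\pts,l) \ra (\ptsp,l^\prime)\hookrightarrow S^\prime$, establishing both parts simultaneously in the induction. The central tools are Theorem~\ref{livesoundness1} (which allows us to transfer $\models_{l}$-judgements across executions), the lemma on expression evaluation under $\thicksim_{l}$ equivalence (for states agreeing on $FV(e)$), and Lemma~\ref{lem2}. The close structural match between the inference rules here and those for live-variables means that most of the proof skeleton can be inherited from Theorem~\ref{soundness-live}; only the new transformation component $\hookrightarrow S^\prime$ needs fresh justification.

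First I would dispatch the assignment rules. For rules that replace the statement by \textit{skip}, such as $(\coloneqq^e_1)$ and $(*\coloneqq^e_1)$, I need to show that the update performed by $S$ on $\gamma$ does not disturb the equivalence with $\gamma_*^\prime=\gamma_*$ on $l^\prime$. In $(\coloneqq^e_1)$ this is immediate since only $x$ is modified and $x\notin l^\prime$. In $(*\coloneqq^e_1)$ the key step is: from $x\in l$ and $\gamma\models_{l}\pts$ together with $\gamma(x)=y^\prime$ (which must hold since $S$ does not abort) we get $y^\prime\in\pts(x)$, and then $\pts(x)\cap l^\prime=\emptyset$ forces $y\notin l^\prime$, so the update at $y$ is invisible on $l^\prime$. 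For the rules that retain the statement, such as $(\coloneqq^e_2)$, $(\coloneqq\&^e_2)$, $(\coloneqq *^e_2)$ and $(*\coloneqq^e_2)$, the membership of $FV(e)$, or of $\{y,z\mid z^\prime\in\pts(y)\}$, in $l$ ensures that $\lb e\rb\gamma=\lb e\rb\gamma_*$ (or that $\gamma$ and $\gamma_*$ agree on the dereferenced location), whence the same update produces equivalent post-states.

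The compound-statement rules $(seq^e)$, $(\textit{if}^e)$ and $(\textit{whl}^e)$ follow by direct application of the induction hypothesis together with the auxiliary lemma that gives $\lb b\rb\gamma=\lb b\rb\gamma_*$ when $FV(b)\subseteq l$, so that the branches taken by $S$ and $S^\prime$ coincide. The $(\textit{csq}^e)$ case uses part~(1) of Theorem~\ref{livesoundness1} to weaken the equivalence relation appropriately. For the parallel rules, $(par^e)$ is handled by mimicking, step by step, the argument used for $(par^l)$ in Theorem~\ref{soundness-live}: given a permutation $\theta$ witnessing the execution of $S$ on $\gamma$ (for part~1) or of $S^\prime$ on $\gamma_*$ (for part~2), the same permutation is reused on the other side, and an inner induction on the thread index propagates the $\thicksim_{(\cdot,\cdot)}$ equivalence using the induction hypothesis applied to each $S_i\hookrightarrow S_i^\prime$, with the various $pts$-components of the pre-type assembled exactly as the rule prescribes. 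The rules $(\textit{par-if}^e)$ and $(\textit{par-for}^e)$ reduce to $(par^e)$ via the operational semantics.

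The main obstacle is part~(2) for the parallel constructs. Here the execution of $S^\prime$ on $\gamma_*$ fixes a permutation $\theta$ and a sequence of intermediate states, and I must reconstruct a matching terminating execution of $S$ on $\gamma$. The non-abortion hypothesis is essential: since $S$ does not abort at $\gamma$, no permutation of its threads can produce an \textit{abort}, so in particular the permutation $\theta$ also yields a terminating execution of $S$. Invoking part~(2) of the induction hypothesis on each thread $S_{\theta(i)}$ along this execution then yields the intermediate $\thicksim$-equivalences; the delicate point is checking that the hypotheses of each inner invocation are satisfied, which requires combining the equivalence carried from the previous step with the $\models_{l_{\theta(i+1)}}$-judgement obtained from Theorem~\ref{livesoundness1} applied to the previously processed threads.
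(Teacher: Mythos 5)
Your proposal takes essentially the same route as the paper: structural induction on the type derivation, inheriting the argument of Theorem~\ref{soundness-live} for the rules where $S^\prime=S$ and taking $\gamma_*^\prime=\gamma_*$ for the rules where $S^\prime=\textit{skip}$. In fact your write-up is considerably more detailed than the paper's own two-sentence sketch, and the details you supply (in particular, why the discarded update in $(*\coloneqq^e_1)$ is invisible on $l^\prime$, and how non-abortion of $S$ is used to reuse the permutation in part~(2) of the parallel case) are exactly the points the paper leaves implicit.
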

The proof of this theorem is by induction on the structure of type
derivation and it follows smoothly
from~Theorem~\ref{soundness-live}. More
precisely~Theorem~\ref{soundness-live} is used when $S^\prime=S$.
When $S^\prime=\textit{skip}$, we take $\gamma_*^\prime=\gamma_*$ in
1. We note that the requirement of~Theorem~\ref{soundness-live} that
$S$ does not abort at $\gamma_*$ is guaranteed when this theorem is
called in the proof of Theorem~\ref{sounddead}.

\section{Related work}\label{rwork}
\subsubsection{Analysis of multithreaded programs:}
The analysis of multithreaded programs is an area that receives
growing interest. It is a challenging area~\cite{Sarkar09} as the
presence of threading complicates the program analysis. The work in
this area can be classified into two main categories. One category
includes techniques that were designed specifically to optimize or
correct multithreaded programs. The other category includes
techniques whose scope was extended from sequential programs to
multithreaded programs.

The first category mentioned above covers several directions of
research; synchronization analysis, deadlock, data race, and memory
consistency. The purpose in the analysis of synchronization
constructs ~\cite{Tian09,Xu10} is to clarify how the synchronization
actions apart executions of program segments. The result of this
analysis can be used by compiler to conveniently add join-fork
constructs. Clearly, adding such join-fork constructs will reduce
the run time of the program. One problem of multithreading computing
is deadlock which results from round waiting to gain resources.
Researchers have developed various techniques for deadlock
detection~\cite{Kim09,Wang08,Xiao10}. The situation when a memory
location is accessed by two threads (one of them writes in the
location) without synchronization is called data race. On direction
of research in this category focuses on data race
detection~\cite{Leung09}. The analysis of multithreaded programs
becomes even harder in the presence of a weak memory consistency
model because such model does not guarantee that a write statement
included in one thread is observed by other threads in the same
order. However such model simplifies some issues on the hardware
level. The work in this direction, like~\cite{Gelado10}, aims at
overcomes the drawbacks of using a simple consistency memory model.
Asymmetric Distributed Shared Memory (ADSM), a programming model
serving heterogeneous computing, is introduced in~\cite{Gelado10}.
ADSM manages a shared virtual memory to enable CPUs-access to
addresses on the accelerator real memory.

Under the second category mentioned above comes several directions
of research. One such direction is the using of flow-insensitive
analysis techniques to analyze multithreaded
programs~\cite{Lubbers08,Ruf00}. Although flow-insensitive
techniques are not very precise, some applications can afford that.
Examples of program analyses whose techniques were extended to cover
multithreaded programs are code motion~\cite{Knoop99}, constant
propagation~\cite{Lee98}, data flow for multithreaded programs with
copy-in and copy-out memory semantics~\cite{Kin03,Liu09}, and
concurrent static single assignment form~\cite{Lee97}.

The problem with almost all the work refereed to above is that it
does not apply to pointer programs. More precisely, for some of the
work the application is possible only if we have the result of a
pointer analysis for the input pointer program. The technique
presented in this paper for optimizing multithreaded programs has
the advantage of being simpler and more reliable than the
optimization techniques refereed to above that would work in the
presence of a pointer analysis.
\subsubsection{Pointer analysis:}
The pointer analysis for sequential programs has been studied
extensively for decades~\cite{Hind01}. One way to classify the work
in this area is according to properties of flow-sensitivity and
context-sensitivity. Hence the work is classified into
flow-sensitivity, flow-insensitivity, context-sensitivity, and
context-insensitivity.

Flow-sensitive analyses~\cite{Hardekopf09,Wang09,Yu10}, which are
more natural than flow-insensitive to most applications, consider
the order of program commands. Mostly these analyses perform an
abstract interpretation of program using dataflow analysis to
associate each program point with a points-to relation.
Flow-insensitive analyses~\cite{Adams02,Anderson02} do not consider
the order of program commands. Typically the output of these
analyses, which are performed using a constraint-based approach, is
a points-to relation that is valid all over the program. Clearly the
flow-sensitive approach is more precise but less efficient than the
flow-insensitive one. Moreover flow-insensitive techniques can be
used to analyze multithreaded programs.

The idea of context-sensitive approach~\cite{Nasre09,Yu10} is to
produce a points-to relation for the context of each call site of
each procedure. On the other hand, the
context-insensitive~\cite{Liang01} pointer analysis produces one
points-to relation for each procedure to cover contexts of all call
sites. As expected the context-sensitive approach is more precise
but less efficient than the context-insensitive one.

Although the problem of pointer analysis for sequential programs was
studied extensively, a little effort was done towards a pointer
analysis for multithreaded programs. In~\cite{Rugina03}, a flow
sensitive analysis for multithreaded programs was introduced. This
analysis associates each program point with a triple of points-to
relations. This in turn complicates the the analysis and creates a
sort of redundancy in the collected points-to information.
Investigating the details of this approach and our work makes it
apparent that our work is simpler and more accurate than this
approach. Moreover our approach provides a proof for the correctness
of the pointer analysis for each program. To the best of our
knowledge, such proof is not known to be provided by any other
existing approach.

\subsubsection{Type systems in program analysis:}
The work
in~\cite{Benton04,Laud06,Nielson02,El-Zawawy11,El-Zawawy11-2,El-Zawawy11-3}
is among the closest work to ours in the sense that it uses type
systems to achieve the program analysis in a way similar to the
present paper. The work in~\cite{Saabas08} can be seen as a special
case of our work for the case of while language where there is no
threading nor pointer constructs.

The work in~\cite{Laud06} shows that a good deal of program analysis
can be done using type systems. More precisely, it proves that for
every analysis in a certain class of data-flow analyses, there
exists a type system such that a program checks with a type if and
only if the type is a supertype for the set resulting from running
the analysis on the program. The type system in~\cite{Naik08} and
the flow-logic work in~\cite{Nielson02}, which is used
in~\cite{Nicola10} to study security of the coordinated systems, are
very similar to~\cite{Laud06}. Moreover, the work~\cite{Naik08}
transforms logical statements about programs to statements about the
program optimizations. For the simple while language, the work
in~\cite{Benton04} introduces type systems for constant folding and
dead code elimination and also logically proves correctness of
optimizations. The bidirectional data-flow analyses and their
program optimizations are treated with type systems
in~\cite{Frade09}. Earlier, related work (with structurally-complex
type systems) is~\cite{Palsberg95}. The work in~\cite{El-Zawawy11-2}
presents type systems that checks memory safety of multithreaded
programs using sensitive-nonsensitive pointer analysis.

To the best of our knowledge, our approach is the first attempt to
use type systems to optimize multithreaded programs and associates
every individual optimization with a justification for correctness.

%\bibliographystyle{unsrt}
%\bibliography{Xbib}
\end{document}

